\newtheorem{theorem}{Theorem}
\newtheorem{lemma}[theorem]{Lemma}
\newtheorem{proposition}[theorem]{Proposition}
\newtheorem{definition}[theorem]{Definition}
\newtheorem{fact}[theorem]{Fact}
\newcommand{\sket}[1]{{\ensuremath{\lvert#1\rangle}}}
\newcommand{\lket}[1]{{\ensuremath{\left\lvert#1\right\rangle}}}
\newcommand{\ket}[1]{\mathchoice{\lket{#1}}{\sket{#1}}{\sket{#1}}{\sket{#1}}}
\newcommand{\sbra}[1]{{\ensuremath{\langle#1\rvert}}}
\newcommand{\lbra}[1]{{\ensuremath{\left\langle#1\right\rvert}}}
\newcommand{\bra}[1]{\mathchoice{\lbra{#1}}{\sbra{#1}}{\sbra{#1}}{\sbra{#1}}}
\newcommand{\sbraket}[2]{{\ensuremath{\langle#1\rvert#2\rangle}}}
\newcommand{\lbraket}[2]{{\ensuremath{\left\langle#1\!\left\rvert\vphantom{#1}#2\right.\!\right\rangle}}}
\newcommand{\braket}[2]{\mathchoice{\lbraket{#1}{#2}}{\sbraket{#1}{#2}}{\sbraket{#1}{#2}}{\sbraket{#1}{#2}}}
\newcommand{\sketbra}[2]{{\ensuremath{\lvert #1\rangle\langle #2\rvert}}}
\newcommand{\lketbra}[2]{{\ensuremath{\left\lvert #1 \middle\rangle\middle\langle #2\right\rvert}}}
\newcommand{\ketbra}[2]{\mathchoice{\lketbra{#1}{#2}}{\sketbra{#1}{#2}}{\sketbra{#1}{#2}}{\sketbra{#1}{#2}}}
\newcommand{\proj}[1]{\ketbra{#1}{#1}}
\newcommand{\ident}{\mathrm{id}}
\DeclareMathOperator{\tr}{\mathrm{Tr}}
\newcommand{\demi}{\frac{1}{2}}
\newcommand{\mbC}{\mathbb{C}}
\newcommand{\mbR}{\mathbb{R}}
\newcommand{\sfA}{\mathsf{A}}
\newcommand{\sfB}{\mathsf{B}}
\newcommand{\sfC}{\mathsf{C}}
\newcommand{\sfD}{\mathsf{D}}
\renewcommand{\otimes}{\varotimes}
\newcommand{\halpha}{\hat{\alpha}}
\DeclareMathOperator{\Op}{Op}
\DeclareMathOperator{\realpart}{Re}
\newcommand{\linop}{\mathrm{L}}
\newcommand{\Pos}{\mathrm{Pos}}
\newcommand{\D}{\mathrm{D}}
\DeclareMathOperator{\Supp}{supp}
\begin{document}
\cleanlookdateon
\title{Chain rules for quantum Rényi entropies}

\author{Frédéric Dupuis\footnote{Part of this work was performed while the author was at the Institute for Computer Science, Aarhus University, Aarhus, Denmark.}\\
{\it\small Faculty of Informatics}\\
{\it\small Masaryk University}\\
{\it\small Brno, Czech Republic}\\[2mm]
}


\maketitle

\begin{abstract}
    We present chain rules for a new definition of the quantum Rényi conditional entropy sometimes called the ``sandwiched'' Rényi conditional entropy. More precisely, we prove analogues of the equation $H(AB|C) = H(A|BC) + H(B|C)$, which holds as an identity for the von Neumann conditional entropy. In the case of the Rényi entropy, this relation no longer holds as an equality, but survives as an inequality of the form $H_{\alpha}(AB|C) \geqslant H_{\beta}(A|BC) + H_{\gamma}(B|C)$, where the parameters $\alpha, \beta, \gamma$ obey the relation $\frac{\alpha}{\alpha-1} = \frac{\beta}{\beta-1} + \frac{\gamma}{\gamma-1}$ and $(\alpha-1)(\beta-1)(\gamma-1) > 0$; if $(\alpha-1)(\beta-1)(\gamma-1) < 0$, the direction of the inequality is reversed.
\end{abstract}


\section{Introduction}
The Shannon entropy is one of the central concepts in information theory: it quantifies the amount of uncertainty contained in a random variable, and is used to characterize a wide range of information theoretical tasks. However, it is primarily useful for making asymptotic statements about problems involving repeated experiments, such as i.i.d.~sources or channels. If one is interested not only in asymptotic rates, but in how quickly we can approach these rates with increasing block sizes, the natural quantities that arise are Rényi entropies. Likewise, in the case of ``one-shot'' problems, such as those arising in cryptographic settings, the (smooth) min- and max-entropies are usually the relevant quantities. (For instance, these quantities are relevant in scenarios such as privacy amplification \cite{RK04,renner-phd}, decoupling \cite{fred-these,dbwr10}, and state merging \cite{diploma-berta,dbwr10}.)

All of these information theoretical problems have a quantum counterpart, and it is therefore desirable to construct quantum versions of these entropic quantities. To do this systematically, one can use the fact that each of these entropies can be defined via a notion of \emph{divergence} between two probability distributions; the entropy of a distribution $p$ is then derived from the divergence between $p$ and the uniform distribution. For example, the Shannon entropy can be defined in this way from the Kullback-Leibler divergence. One can then generalize the divergence quantities to quantum states and define quantum entropies in this way. However, these quantum counterparts to the divergences are not unique: the fact that quantum states can fail to commute leads to multiple non-equivalent definitions that all reduce to the classical version in the commutative case. In the case of the Rényi divergence, this phenomenon has led to multiple non-equivalent definitions being in use. One of them~\cite{op93} has been in use for several years and is defined as follows: let $\rho$ and $\sigma$ be two density operators. Then, their $\alpha$-Rényi-divergence is given by
\[ \tilde{D}_\alpha(\rho \| \sigma) := \frac{1}{\alpha-1} \log \frac{\tr[\rho^\alpha \sigma^{1-\alpha}]}{\tr[\rho]}. \]
This definition has found many applications, most notably in hypothesis testing~\cite{on00,quantum-chernoff,h06-2}, as well as for proving the fully quantum asymptotic equipartition property~\cite{tcr08,marco-these}. However, a new definition has recently been proposed in~\cite{mdsft13,wwy13} which seems to possess better properties than the traditional definition in certain parameter regimes. It is defined as:
\[ D_\alpha(\rho \| \sigma) := \frac{1}{\alpha-1} \log \left( \frac{1}{\tr[\rho]} \tr\left[ \left( \sigma^{\frac{1-\alpha}{2\alpha}} \rho \sigma^{\frac{1-\alpha}{2\alpha}} \right)^{\alpha} \right] \right). \]
(See Section~\ref{sec:renyi-definition} below for a more complete definition.) This new definition is sometimes called the ``sandwiched'' Rényi divergence, $\sigma$ being the bread and $\rho$ the meat in the expression above. The corresponding conditional Rényi entropy is then given by:
\[ H_{\alpha}(A|B)_{\rho} := -\min_{\sigma_B} D_{\alpha}(\rho_{AB} \| \ident_A \otimes \sigma_B) \]
for a bipartite state $\rho_{AB}$. This new definition has already found a large number of applications: it characterizes the strong converse exponent in hypothesis testing~\cite{mo13} (at least in some parameter regimes) and in classical-quantum channel coding~\cite{mo14}, and can be used to prove strong converses for a variety of information theoretical problems~\cite{wwy13,cmw14,tww14}. Furthermore, because of this wide applicability, the fundamental properties of $D_{\alpha}$ and $H_{\alpha}$ are being investigated: a number of properties have been proven in~\cite{mdsft13}, including the data processing inequality for $1 < \alpha \leqslant 2$ and a duality property (see Fact~\ref{fact:duality} below, also independently proven in \cite{b13}). The data processing inequality was also proven in~\cite{b13} for $\alpha > 1$ and in~\cite{fl13} for the full range $\alpha \geqslant \demi$. Moreover, in~\cite{ad13}, the quantity was generalized to so-called $\alpha$-$z$-divergences, and in~\cite{tbh14} the authors presented a duality property that involves both the new ``sandwiched'' definition and the traditional definition.

The present work is in this vein. One of the most fundamental properties of the von Neumann entropy is the so-called \emph{chain rule}: given a tripartite state $\rho_{ABC}$, we can break down the conditional entropy $H(AB|C)_{\rho}$ into two parts: $H(AB|C)_{\rho} = H(A|BC)_{\rho} + H(B|C)_{\rho}$. While this rule no longer holds as an equality for the Rényi entropy, one can nonetheless hope to salvage it in the form of inequalities, as was done in~\cite{dbwr10,vdtr12} for the smooth min-/max-entropies. The result is the main theorem of this paper:

\begin{theorem}\label{thm:main-result}
    Let $\rho_{ABC} \in \D(\sfA \otimes \sfB \otimes \sfC)$ be a normalized density operator, and let $\alpha, \beta, \gamma \in (\demi, 1) \cup (1, \infty)$ be such that $\frac{\alpha}{\alpha-1} = \frac{\beta}{\beta-1} + \frac{\gamma}{\gamma-1}$. Then, we have that
    \begin{equation}\label{eqn:first-case}
        H_{\alpha}(AB|C)_{\rho} \geqslant H_{\beta}(A|BC)_{\rho} + H_{\gamma}(B|C)_{\rho}
    \end{equation}
    if $(\alpha-1)(\beta-1)(\gamma-1) > 0$, and
    \begin{equation}\label{eqn:second-case}
        H_{\alpha}(AB|C)_{\rho} \leqslant H_{\beta}(A|BC)_{\rho} + H_{\gamma}(B|C)_{\rho}
    \end{equation}
    if $(\alpha-1)(\beta-1)(\gamma-1) < 0$.
\end{theorem}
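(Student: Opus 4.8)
\medskip
\noindent\emph{Proof sketch.} The plan is to use the variational definition $H_\alpha(\cdot|\cdot)_\rho = -\min_\sigma D_\alpha(\cdot\,\|\,\ident\otimes\sigma)$ to reduce the theorem to a single \emph{pointwise} chain rule for the divergence $D_\alpha$, and then to prove that pointwise inequality by complex interpolation (the Hadamard three-line theorem for operator-valued analytic functions, i.e.\ complex interpolation of Schatten norms). First I would carry out the reduction. Since $H_\alpha(AB|C)_\rho = -\min_{\sigma_C} D_\alpha(\rho_{ABC}\,\|\,\ident_{AB}\otimes\sigma_C)$, inequality \eqref{eqn:first-case} is equivalent to $\min_{\sigma_C} D_\alpha(\rho_{ABC}\,\|\,\ident_{AB}\otimes\sigma_C) \leqslant \min_{\omega_{BC}} D_\beta(\rho_{ABC}\,\|\,\ident_A\otimes\omega_{BC}) + \min_{\tau_C} D_\gamma(\rho_{BC}\,\|\,\ident_B\otimes\tau_C)$. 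Fixing the optimizers $\omega_{BC},\tau_C$ on the right, it then suffices to exhibit \emph{one} feasible $\sigma_C$ --- built by an explicit formula from $\omega_{BC}$, $\tau_C$ and the marginals of $\rho$, and reducing to $\sigma_C=\rho_C$ in the commutative / von Neumann limit --- for which $D_\alpha(\rho_{ABC}\,\|\,\ident_{AB}\otimes\sigma_C) \leqslant D_\beta(\rho_{ABC}\,\|\,\ident_A\otimes\omega_{BC}) + D_\gamma(\rho_{BC}\,\|\,\ident_B\otimes\tau_C)$. For \eqref{eqn:second-case} an analogous construction is run in the other direction: given an \emph{arbitrary} $\sigma_C$, one builds $\omega_{BC},\tau_C$ from it and proves the \emph{reverse} pointwise inequality; since the divergences on the right stay lower bounded by their minima, this holds for every $\sigma_C$ and yields $\min_{\sigma_C}D_\alpha\geqslant\cdots$, hence \eqref{eqn:second-case}. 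In both cases the whole theorem collapses to a single pointwise chain rule for $D_\alpha$ in which the three second arguments are tied together by one explicit relation.

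For the pointwise inequality I would rewrite each divergence as a Schatten norm. From $\tr\bigl[(\sigma^{\frac{1-\alpha}{2\alpha}}\rho\,\sigma^{\frac{1-\alpha}{2\alpha}})^{\alpha}\bigr] = \bigl\|\sigma^{\frac{1-\alpha}{2\alpha}}\rho^{1/2}\bigr\|_{2\alpha}^{2\alpha}$ one gets $D_\alpha(\rho\|\sigma) = \frac{2\alpha}{\alpha-1}\log\bigl\|\sigma^{\frac{1-\alpha}{2\alpha}}\rho^{1/2}\bigr\|_{2\alpha}$, and likewise for $\beta$ and $\gamma$; passing to a purification $\rho_{ABCR}$ of $\rho_{ABC}$ lets me write the $B|C$ term as a Schatten norm of an operator built from $\rho_{ABCR}^{1/2}$ and an ancilla factor, so that all three quantities share the same square root and the powers of $\sigma_C,\omega_{BC},\tau_C$ become comparable under one interpolation. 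It is worth noting that $\tfrac{\alpha}{\alpha-1}$ is exactly the (possibly negative) Hölder-conjugate exponent of $\alpha$, so the hypothesis $\tfrac{\alpha}{\alpha-1}=\tfrac{\beta}{\beta-1}+\tfrac{\gamma}{\gamma-1}$ is precisely the ``additivity of conjugate exponents'' one needs for a Hölder-type splitting of a norm.

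The heart of the argument is then to build an operator-valued function $z\mapsto G(z)$, holomorphic and bounded on the strip $\{0\leqslant\realpart z\leqslant1\}$ and made out of $z$-dependent powers of $\omega_{BC}$, $\tau_C$ and $\rho_{ABCR}^{1/2}$, such that at an interior point $z=\theta$ the operator $G(\theta)$ (in a suitable Schatten index $p_\theta$) reproduces the $\alpha$-quantity, while on the boundary lines $\realpart z=0$ and $\realpart z=1$ the Schatten norms of $G$ (in indices $p_0$, $p_1$, with $\tfrac1{p_\theta}=\tfrac{1-\theta}{p_0}+\tfrac\theta{p_1}$) are controlled by the $\beta$- and $\gamma$-quantities respectively; here $\sigma_C$ is \emph{defined} so that the $z$-dependent powers collapse correctly at $z=\theta$, and the indices and $\theta$ are forced by $\tfrac{\alpha}{\alpha-1}=\tfrac{\beta}{\beta-1}+\tfrac{\gamma}{\gamma-1}$. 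Complex interpolation of Schatten norms then gives $\log\|G(\theta)\|_{p_\theta}\leqslant(1-\theta)\sup_t\log\|G(it)\|_{p_0}+\theta\sup_t\log\|G(1+it)\|_{p_1}$, and substituting the definitions and multiplying through by the prefactors $\tfrac{2\alpha}{\alpha-1},\tfrac{2\beta}{\beta-1},\tfrac{2\gamma}{\gamma-1}$ turns this into the desired pointwise divergence inequality.

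The main obstacle is precisely this interpolation set-up, in three related respects. First, one must choose $G$ and the candidate state $\sigma_C$ (respectively $\omega_{BC},\tau_C$) so that the interior condition and the two boundary conditions hold \emph{simultaneously} and each collapses to exactly the corresponding $\alpha$-, $\beta$-, or $\gamma$-term; this is where the relation $\tfrac{\alpha}{\alpha-1}=\tfrac{\beta}{\beta-1}+\tfrac{\gamma}{\gamma-1}$ is genuinely forced, and pinning down the sandwiching powers and the exact form of $\sigma_C$ is the crux. Second, one must track the signs of the prefactors $\tfrac{2\alpha}{\alpha-1},\tfrac{2\beta}{\beta-1},\tfrac{2\gamma}{\gamma-1}$ and --- crucially --- whether the interpolation parameter $\theta$ lies inside or outside $[0,1]$, since for $\theta\notin[0,1]$ one must use a \emph{reverse} three-line inequality, which flips the direction; the boundary between these two regimes is exactly $(\alpha-1)(\beta-1)(\gamma-1)=1$ (excluded from the statement because there a prefactor or $\theta$ degenerates), and this is what separates \eqref{eqn:first-case} from \eqref{eqn:second-case}. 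Third, one must ensure that the minimizers defining $H_\beta(A|BC)_\rho$ and $H_\gamma(B|C)_\rho$ are attained and that all operators involved can be taken invertible; this is handled by a routine full-rank perturbation of $\rho$ and of the candidate states, removed by continuity at the end.
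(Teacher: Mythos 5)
Your broad strategy---rewrite the divergences as Schatten norms and apply Hadamard three-line interpolation, reading $\frac{\alpha}{\alpha-1}=\frac{\beta}{\beta-1}+\frac{\gamma}{\gamma-1}$ as additivity of H\"older conjugates---is indeed the paper's, but the two places you yourself flag as the crux are where your plan diverges from what works, and in one of them it would fail outright. The missing idea is duality. The paper purifies $\rho_{ABC}$ to $\ket{\psi}_{ABCD}$, sets $X=\Op_{AD\to BC}(\ket{\psi})$, and uses $H_{\beta}(A|BC)_{\rho}=-H_{\hat{\beta}}(A|D)_{\rho}$ to convert the troublesome minimization over $\omega_{BC}$ into a supremum over states $\tau_D$ on the purifying system, with the \emph{dual} exponent: $H_{\beta}(A|BC)_{\rho}=-\log\sup_{\tau_D}\|X\tau_D^{\beta'/2}\|_{2\hat{\beta}}^{2/\beta'}$, while $H_{\alpha}(AB|C)_{\rho|\sigma}=-\log\sup_{\tau_D}\|\sigma_C^{-\alpha'/2}X\tau_D^{\alpha'/2}\|_{2}^{2/\alpha'}$ and $H_{\gamma}(B|C)_{\rho|\sigma}=-\log\|\sigma_C^{-\gamma'/2}X\|_{2\gamma}^{2/\gamma'}$ (Lemma 4). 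This is what lets all three quantities be norms of a single operator $X$ with $\sigma_C$-powers on the left and $\tau_D$-powers on the right, so that one holomorphic family $F(z)$ covers them and the optimizations are dispatched simply by taking $\sup_{\tau_D}$ at the end. Your proposed reduction---fix the optimizers $\omega_{BC},\tau_C$ and build an explicit $\sigma_C$ from them satisfying a pointwise divergence chain rule---is unsupported (no formula is given) and also unnecessary: the paper proves, for one and the same arbitrary $\sigma_C$ on both sides, $H_{\alpha}(AB|C)_{\rho|\sigma}\geqslant H_{\beta}(A|BC)_{\rho}+H_{\gamma}(B|C)_{\rho|\sigma}$ (and the reverse in the other regime), from which the optimized statement follows by choosing $\sigma$ appropriately; a pointwise inequality at a fixed, unoptimized $\omega_{BC}$ is not what the interpolation delivers, precisely because the $\omega_{BC}$-optimization must first be moved to the purifying system by duality.

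The second genuine gap is your mechanism for the dichotomy between \eqref{eqn:first-case} and \eqref{eqn:second-case}. There is no general ``reverse'' three-line or extrapolation inequality, so the step ``if $\theta\notin[0,1]$, use a reverse three-line theorem'' is not valid. You are right that keeping the $\alpha$-norm at the interior point would force $\theta=\alpha'/\gamma'$ outside $(0,1)$ when $\beta'$ and $\gamma'$ have opposite signs, but the correct fix is to permute the roles: in that regime the paper places $\|X\tau_D^{\beta'/2}\|_{2\hat{\beta}}$ (or $\|\sigma_C^{-\gamma'/2}X\|_{2\gamma}$, according to which of $\beta',\gamma'$ is positive) at the interior point with $\theta\in(0,1)$, puts the $\alpha$-norm on a boundary line, and then rearranges; the inequality direction flips only because the final exponents $2/\beta'$ or $2/\gamma'$ are applied with negative sign. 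So the dichotomy is governed by the sign pattern of $\alpha',\beta',\gamma'$ and is handled entirely within the forward interpolation theorem (Propositions 5 and 6), and the remaining cases with $\alpha<1$ are obtained by applying the duality (Fact 2) to all three terms---a second essential use of duality absent from your sketch. (Your concern about attainment of minimizers and full-rank perturbations is harmless but not needed on the paper's route, since everything is phrased for arbitrary $\sigma_C$ and with suprema over $\tau_D$.)
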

The proof is given in Section~\ref{sec:main-result}, in the form of Propositions~\ref{prop:min-min-min} and~\ref{prop:min-min-max}, where the theorem statement is split into the case where all the signs are the same and the mixed-sign case, and where the statements proven are slightly more general. The proof is based on the Riesz-Thorin-type norm interpolation techniques developed in \cite{b13}, which are then applied to convenient expressions for the Rényi entropy.


\section{Preliminaries}

\subsection{Notation}\label{sec:notation}
In the table below, we summarize the notation used throughout the paper:
\begin{center}
    \begin{tabular}{|c|l|}
        \hline
        \emph{Symbol} & \multicolumn{1}{c|}{\emph{Definition}}\\
        \hline
        $A, B, C, \dots$ & Quantum systems\\
        $\sfA,\sfB,\dots$ & Hilbert spaces corresponding to systems $A, B,\dots$\\
        $\linop(\sfA, \sfB)$ & Set of linear operators from $\sfA$ to $\sfB$\\
        $\linop(\sfA)$ & $\linop(\sfA, \sfA)$\\
        $X_{AB}$ & Operator in $\linop(\sfA \otimes \sfB)$\\
        $X_{A \rightarrow B}$ & Operator in $\linop(\sfA, \sfB)$\\
        $\Pos(\sfA)$ & Set of positive semidefinite operators on $\sfA$\\
        $\D(\sfA)$ & Set of positive semidefinite operators on $\sfA$ with unit trace\\
        $\ident_A$ & Identity operator on $\sfA$\\
        $\alpha', \beta', \gamma'$ & $\frac{\alpha-1}{\alpha}, \frac{\beta-1}{\beta}, \frac{\gamma-1}{\gamma}$\\
        $\hat{\alpha}, \hat{\beta}, \hat{\gamma}$ & Dual parameter of $\alpha, \beta, \gamma$: $\frac{1}{\alpha} + \frac{1}{\hat{\alpha}} = 2$, etc.\\
        $\Op_{A \rightarrow B}(\ket{i}_A \otimes \ket{i}_B)$ & $\ket{j}_B \bra{i}_A$, for computational basis vectors $\ket{i}_A, \ket{j}_B$  \\
        $X^{\dagger}$ & Adjoint of $X$.\\
        $X^{\top}$ & Transpose of $X$ with respect to the computational basis.\\
        $X_{A} \geqslant Y_A$ & $X-Y \in \Pos(\sfA)$.\\
        $\| X \|_p$ & $\tr[(X^{\dagger} X)^{\frac{p}{2}}]^{\frac{1}{p}}$. Note that if $p<1$, this is not a norm.\\
        $\Supp X$ & Support of $X$\\
        \hline
    \end{tabular}
\end{center}
Note in particular the use of the shorthand $\alpha'$ to denote $\frac{\alpha-1}{\alpha}$, i.e.~the inverse of the Hölder conjugate of $\alpha$; it will be used extensively. Using this shorthand, $\alpha \in (\demi, 1)$ corresponds to $\alpha' \in (-1,0)$ and $\alpha > 1$ corresponds to $\alpha' \in (0,1)$.

Another convention used extensively throughout the paper is the implicit tensorization by the identity: if we have two operators $X_{AB}$ and $Y_{B}$, by $X_{AB} Y_B$ we mean $X_{AB}(\ident_A \otimes Y_B)$. We will also often omit subscripts when doing so should not create confusion.

We will also make use of the operator-vector correspondence. We will endow every Hilbert space with its own computational basis, denoted by $\ket{i}_A$ for the space $\sfA$ and so on, and we define the linear map $\Op_{A \rightarrow B}: \sfA \otimes \sfB \rightarrow \linop(\sfA, \sfB)$ by its action on the computational basis as follows: $\Op_{A \rightarrow B}(\ket{i}_A \otimes \ket{j}_B) = \ket{j}_B \bra{i}_A$.

\subsection{The quantum Rényi entropy}\label{sec:renyi-definition}
We now present the definition of the quantum Rényi divergence, first defined in~\cite{mdsft13,wwy13} and sometimes called the ``sandwiched'' Rényi divergence:
\begin{definition}[Quantum Rényi divergence]
    Let $\rho, \sigma \in \Pos(\sfA)$, and let $\alpha \in [\demi, 1) \cup (1, \infty)$. Then, we define their Rényi $\alpha$-divergence as
        \begin{align*}
            D_{\alpha}(\rho \| \sigma) &:= \frac{1}{\alpha-1} \log\left( \frac{1}{\tr[\rho]} \tr\left[ \left( \sigma^{\frac{-\alpha'}{2}} \rho \sigma^{\frac{-\alpha'}{2}} \right)^{\alpha} \right] \right),
        \end{align*}
        unless $\alpha > 1$ and $\Supp \rho \nsubseteq \Supp \sigma$, in which case $D_{\alpha}(\rho \| \sigma) := \infty$.
\end{definition}

The Rényi entropy is then defined as follows:
\begin{definition}[Quantum Rényi entropy]
    Let $\rho_{AB} \in \Pos(\sfA \otimes \sfB)$ and $\sigma_B \in \D(\sfB)$, and let $\alpha \in [\demi, 1) \cup (1, \infty)$. Then,
        \[ H_{\alpha}(A|B)_{\rho|\sigma} := -D_{\alpha}(\rho_{AB} \| \ident_A \otimes \sigma_B), \]
    and
        \[ H_{\alpha}(A|B)_{\rho} := -\inf_{\omega_B \in \D(\sfB)} D_{\alpha}(\rho_{AB} \| \ident_A \otimes \omega_B). \]
\end{definition}
By taking the limit as $\alpha \rightarrow 1$, we recover the von Neumann entropy; by taking the limit $\alpha \rightarrow \infty$, we get the min-entropy~\cite{renner-phd}; by choosing $\alpha=\demi$, we get the max-entropy \cite{min-max-entropy}; and by choosing $\alpha=2$ we get the collision entropy from \cite{renner-phd}.

We also recall the \emph{duality} property of quantum Rényi entropies proven in~\cite[Theorem 9]{mdsft13} and also independently in \cite[Theorem 9]{b13}, which is the Rényi analogue of the duality between min- and max-entropy:
\begin{fact}[Duality of Rényi entropies]\label{fact:duality}
    Let $\ket{\psi}_{ABC} \in \sfA \otimes \sfB \otimes \sfC$ be a normalized pure state, and let $\rho_{ABC} := \proj{\psi}_{ABC}$. Then, we have that for any $\alpha \in (\demi, 1) \cup (1, \infty)$,
    \[ H_{\alpha}(A|B)_{\rho} = -H_{\hat{\alpha}}(A|C)_{\rho}, \]
    where $\frac{1}{\alpha} + \frac{1}{\hat{\alpha}} = 2$.
\end{fact}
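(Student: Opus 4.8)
The plan is to prove the duality relation $H_{\alpha}(A|B)_{\rho} = -H_{\hat{\alpha}}(A|C)_{\rho}$ by unpacking both sides into explicit expressions involving Schatten-type norms of operators built from the pure state $\ket{\psi}_{ABC}$, and then matching them via a change of the optimizing operator. First I would fix the optimal $\sigma_B$ in the variational definition of $H_{\alpha}(A|B)_{\rho}$ and write out $D_{\alpha}(\rho_{AB} \| \ident_A \otimes \sigma_B)$ explicitly. Recalling that $\alpha' = \frac{\alpha-1}{\alpha}$, the divergence involves $\tr\left[ \left( \sigma_B^{-\alpha'/2} \rho_{AB} \sigma_B^{-\alpha'/2} \right)^{\alpha} \right]$, which equals the $2\alpha$-Schatten quantity $\snorm{\sigma_B^{-\alpha'/2} \rho_{AB}^{1/2}}_{2\alpha}^{2\alpha}$ after using $\tr[(X^{\dagger}X)^{\alpha}] = \snorm{X}_{2\alpha}^{2\alpha}$. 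Thus the key is to express the entropy as an optimization of a Schatten norm of the operator $\sigma_B^{-\alpha'/2}\rho_{AB}^{1/2}$, or more conveniently, of an operator acting on the purifying system.

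The central idea is to exploit the purity of $\ket{\psi}_{ABC}$. Because $\rho_{ABC}$ is pure, $\rho_{AB}^{1/2}$ and the reduced operators on the complementary system $C$ are related by the Schmidt/operator-vector correspondence: one can write $\rho_{AB} = L L^{\dagger}$ and $\rho_{AC} = L' L'^{\dagger}$ where $L$ and $L'$ are essentially the same operator viewed through $\Op$ with the roles of $B$ and $C$ interchanged. Concretely, I would use the correspondence $\ket{\psi}_{ABC} \leftrightarrow M_{A \to BC}$ or similar, so that multiplying by $\sigma_B^{-\alpha'/2}$ on the $B$ side can be converted, via the transpose relation on the purifying system, into multiplication by $\sigma_C^{\,?}$ on the $C$ side. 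The goal is to show that the quantity $\snorm{\sigma_B^{-\alpha'/2} \rho_{AB}^{1/2}}_{2\alpha}$ transforms into a corresponding expression $\snorm{\tau_C^{-\hat{\alpha}'/2} \rho_{AC}^{1/2}}_{2\hat{\alpha}}$ under this correspondence, where the relation $\frac{1}{\alpha} + \frac{1}{\hat{\alpha}} = 2$ guarantees that the Schatten index maps correctly: $2\alpha \mapsto 2\hat{\alpha}$ is precisely the Hölder-dual pairing that arises from the singular-value structure of a single operator viewed on two sides.

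The key computational engine is the following fact about Schatten norms of an operator and its "inverse-weighted" forms under a fixed trace constraint: for $M = \Op(\ket{\psi})$, the singular values of $M$ are the Schmidt coefficients, and the two entropies are built from the same spectrum but with dual exponents. I would therefore reduce the problem to the identity $\inf_{\sigma_B} \snorm{\sigma_B^{-\alpha'/2} M}_{2\alpha}^{?} = \left( \inf_{\tau_C} \snorm{\tau_C^{-\hat{\alpha}'/2} M'}_{2\hat{\alpha}}^{?} \right)^{-1}$ (with appropriate exponents and the $\frac{1}{\alpha-1}$ prefactors producing the sign flip between $H_{\alpha}$ and $-H_{\hat{\alpha}}$). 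The cleanest route is to first prove the statement for a \emph{fixed} $\sigma_B$, establishing an exact correspondence $\sigma_B \leftrightarrow \tau_C$ (given by $\tau_C = (\sigma_B^{\top})$ composed with the appropriate power), and then argue that the infima match because this correspondence is a bijection on density operators that preserves the relevant normalization.

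The hard part will be tracking the exponents precisely so that the two optimizations over $\sigma_B$ and $\tau_C$ correspond term-by-term, and confirming that the operator identity genuinely reduces to a single statement about the singular values of $M$ under the dual Schatten indices $2\alpha$ and $2\hat{\alpha}$. In particular, I expect the delicate step to be verifying that the support conditions and the transpose/adjoint bookkeeping in the operator-vector correspondence (the $X^{\top}$ versus $X^{\dagger}$ distinction, and which system the transpose acts on) line up so that the $C$-side weight $\tau_C$ is a genuine density operator and the optimal choice on one side maps to the optimal choice on the other. Since both entropies ultimately depend only on the Schmidt spectrum of $\ket{\psi}_{ABC}$ across the relevant bipartition, once the fixed-$\sigma$ identity is established the equality of infima and the final sign relation $H_{\alpha}(A|B) = -H_{\hat{\alpha}}(A|C)$ should follow by a direct substitution and the constraint $\frac{1}{\alpha}+\frac{1}{\hat{\alpha}}=2$.
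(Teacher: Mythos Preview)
The paper does not give its own proof of this statement; Fact~\ref{fact:duality} is quoted as a known result with citations to \cite{mdsft13} and \cite{b13}. So there is no ``paper's proof'' to match against, but your sketch can be compared with those references and with the machinery the paper does develop (Lemma~\ref{lem:renyi-expressions} and Lemma~\ref{lem:supY}), which essentially packages the same ideas.

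Your overall strategy---rewrite the conditional entropy as a Schatten-norm optimization and exploit the operator-vector correspondence on the purification---is the right one and is exactly what the cited proofs do. However, the step you single out as the ``cleanest route'' is where the gap is. You propose to establish, for each fixed $\sigma_B$, a bijection $\sigma_B \leftrightarrow \tau_C$ (``$\tau_C = (\sigma_B^{\top})$ composed with the appropriate power'') and then argue that the two infima coincide because this map is a bijection on density operators. But $\sigma_B \in \D(\sfB)$ and $\tau_C \in \D(\sfC)$ live on different Hilbert spaces, generically of different dimensions, so no such transpose-and-power map exists; there is no pointwise correspondence between the two optimizations. Relatedly, the remark that ``both entropies ultimately depend only on the Schmidt spectrum of $\ket{\psi}_{ABC}$ across the relevant bipartition'' is misleading: $H_{\alpha}(A|B)$ and $H_{\hat\alpha}(A|C)$ involve different conditioning systems and are not functions of a single Schmidt spectrum.

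What actually makes the proof go through in \cite{mdsft13,b13} is an \emph{additional} variational step, precisely Lemma~\ref{lem:supY} in this paper: one writes $\|\sigma_B^{-\alpha'/2} \rho_{AB} \sigma_B^{-\alpha'/2}\|_{\alpha}$ as $\sup_{\tau}$ (or $\inf_{\tau}$) of $\tr[\tau^{\alpha'} \cdot]$, and after pushing this through the purification one obtains the symmetric form $\bra{\psi}\, \sigma_B^{-\alpha'} \otimes \tau_C^{\alpha'}\, \ket{\psi}$ (cf.\ the first displayed expression in the proof of Lemma~\ref{lem:renyi-expressions}). Here $\tau_C$ is an \emph{independent} optimization variable, not an image of $\sigma_B$; the duality then follows because this expression is manifestly symmetric under $(\sigma_B,\alpha') \leftrightarrow (\tau_C,-\alpha') = (\tau_C,\hat\alpha')$, and optimizing over both variables gives $H_{\alpha}(A|B)_{\rho} = -H_{\hat\alpha}(A|C)_{\rho}$. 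So the missing idea in your plan is this second variational identity that symmetrizes the expression, replacing the nonexistent bijection.
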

Note here that it is particularly convenient to rephrase the condition $\frac{1}{\alpha} + \frac{1}{\hat{\alpha}} = 2$ using the shorthand given in the table in Section~\ref{sec:notation}: it corresponds to $\alpha' = -\hat{\alpha}'$.


\section{Proof of the main result}\label{sec:main-result}

The first ingredient of the proof is the following generalization of Hadamard's three-line theorem proven by Beigi in~\cite{b13} as part of his proof of the data processing inequality of the sandwiched Rényi divergence for $\alpha > 1$:
\begin{theorem}[Theorem 2 from~\cite{b13}]\label{thm:interpolation}
    Let $F:S \rightarrow \linop(\sfA)$, where $S := \{ z \in \mbC : 0 \leqslant \realpart(z) \leqslant 1 \}$, be a bounded map that is holomorphic on the interior of $S$ and continuous on the boundary. Let $0 < \theta < 1$ and define $p_{\theta}$ by
    \[ \frac{1}{p_{\theta}} = \frac{1 - \theta}{p_0} + \frac{\theta}{p_1}. \]
    For $k = 0,1$ define
    \[ M_k = \sup_{t \in \mbR} \| F(k+it) \|_{p_k}. \]
    Then, we have
    \[ \| F(\theta) \|_{p_{\theta}} \leqslant M_0^{1-\theta} M_1^{\theta}. \]
\end{theorem}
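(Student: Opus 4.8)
The plan is to reduce the operator-valued statement to the classical scalar Hadamard three-lines theorem by testing $F(\theta)$ against a suitably interpolated family of auxiliary operators, following Stein's interpolation idea. Throughout, write $q_k$ for the Hölder conjugate of $p_k$, i.e.\ $\frac{1}{p_k} + \frac{1}{q_k} = 1$ for $k \in \{0,1\}$, and let $q_{\theta}$ be conjugate to $p_{\theta}$; the defining relation for $p_{\theta}$ then transfers to $\frac{1}{q_{\theta}} = \frac{1-\theta}{q_0} + \frac{\theta}{q_1}$. Since $\sfA$ is finite-dimensional, the Schatten norm admits the dual characterization $\| F(\theta) \|_{p_{\theta}} = \sup \{ |\tr[F(\theta) Y]| : Y \in \linop(\sfA),\ \| Y \|_{q_{\theta}} \leqslant 1 \}$, so it suffices to bound $|\tr[F(\theta) Y]|$ uniformly over all such $Y$.

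Fix one such $Y$ and write its singular value decomposition $Y = \sum_j s_j \ket{u_j}\bra{v_j}$ with $s_j \geqslant 0$ and $\{u_j\}, \{v_j\}$ orthonormal families. The idea is to promote $Y$ to an analytic operator-valued function by raising its singular values to a complex power that collapses to $1$ at $z = \theta$. Concretely, set $\lambda(z) = q_{\theta}\left(\frac{1-z}{q_0} + \frac{z}{q_1}\right)$, which satisfies $\lambda(\theta) = 1$ and $\realpart \lambda(k+it) = q_{\theta}/q_k$ for $k \in \{0,1\}$, and define $Y(z) = \sum_j s_j^{\lambda(z)} \ket{u_j}\bra{v_j}$. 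Then $Y(\theta) = Y$, the map $z \mapsto Y(z)$ is entire (finitely many terms, each entire in $z$), and on each boundary line one computes $\| Y(k+it) \|_{q_k} = \big(\sum_j s_j^{\,q_{\theta}}\big)^{1/q_k} = \| Y \|_{q_{\theta}}^{q_{\theta}/q_k} \leqslant 1$, since the real part of the exponent is constant in $t$ there.

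With this in hand, define the scalar function $G(z) = \tr[F(z)\, Y(z)]$ on $S$. It is holomorphic on the interior and continuous on $S$ because $F$ and $Y(\cdot)$ are, and it is bounded on $S$ since $F$ is bounded by hypothesis and $\{ Y(z) : z \in S \}$ is a bounded set in finite dimensions (the $s_j$ are fixed and $\realpart \lambda$ stays in a bounded interval on the strip). On each boundary line, Hölder's inequality for Schatten norms gives $|G(k+it)| \leqslant \| F(k+it) \|_{p_k} \| Y(k+it) \|_{q_k} \leqslant M_k$. Applying the classical Hadamard three-lines theorem to $G$ yields $|G(\theta)| \leqslant M_0^{1-\theta} M_1^{\theta}$, and since $G(\theta) = \tr[F(\theta) Y]$, taking the supremum over all admissible $Y$ completes the argument.

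The main obstacle is the analytic interpolation of the test operator together with the matching duality pairing: one must choose the exponent $\lambda(z)$ so that it equals $1$ at $z = \theta$ while forcing $\| Y(k+it) \|_{q_k}$ to be controlled independently of $t$ on each line, and simultaneously ensure that the Hölder pairing $|\tr[XY]| \leqslant \| X \|_{p_k} \| Y \|_{q_k}$ applies. This last point is where the exponent regime matters: the duality characterization and Hölder's inequality are clean when $p_0, p_1 \geqslant 1$, whereas if some $p_k < 1$ then $\| \cdot \|_{p_k}$ is only a quasi-norm and the pairing must be replaced by a suitable substitute (or the statement reduced, via a factorization of $F$, to the normed regime). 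Verifying the boundedness and continuity of $G$ needed to invoke the scalar three-lines theorem is otherwise routine in finite dimensions.
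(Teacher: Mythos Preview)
The paper does not give its own proof of this statement; it is quoted from Beigi~\cite{b13}, with only the added remark that the ordering condition $p_0 \leqslant p_1$ appearing there can be dropped via the substitution $\bar F(z) := F(1-z)$, $\bar\theta := 1-\theta$. There is thus nothing in the paper itself to compare against beyond that symmetry observation.

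On its own merits, your proposal is the standard Stein-interpolation route to operator-valued three-lines theorems and is carried out correctly: dualize $\|F(\theta)\|_{p_\theta}$ against a test operator $Y$, interpolate $Y$ analytically through its singular values with exponent $\lambda(z) = q_\theta\bigl(\tfrac{1-z}{q_0} + \tfrac{z}{q_1}\bigr)$ so that $\|Y(k+it)\|_{q_k} \leqslant 1$ on each boundary line, and apply the scalar Hadamard three-lines theorem to $G(z) = \tr[F(z)Y(z)]$. You also correctly isolate the one genuine subtlety, namely that the dual characterization of $\|\cdot\|_{p_\theta}$ and the Hölder pairing $|\tr[XY]| \leqslant \|X\|_{p_k}\|Y\|_{q_k}$ are only available as stated when $p_0, p_1, p_\theta \geqslant 1$. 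For completeness it is worth noting that in every invocation of Theorem~\ref{thm:interpolation} in this paper (Propositions~\ref{prop:min-min-min} and~\ref{prop:min-min-max}) the exponents $p_0, p_1, p_\theta$ all lie in $[1,\infty)$, so the regime your argument handles cleanly already covers everything the paper needs.
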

Note that in~\cite{b13}, the theorem is stated for more general norms involving a positive operator $\sigma$ (which can be chosen to be $\sigma = \ident$ to obtain this version) and adds a condition that $p_0 \leqslant p_1$ which is not necessary, as can be seen by applying the theorem with $\bar{F}(z) := F(1-z)$ and $\bar{\theta} = 1-\theta$.

The second ingredient is the following lemma, which gives a particularly useful expression for the Rényi entropy:
\begin{lemma}\label{lem:renyi-expressions}
    Let $\ket{\psi}_{ABCD} \in \sfA \otimes \sfB \otimes \sfC \otimes \sfD$ be a normalized pure state, let $X_{AD \rightarrow BC} = \Op_{AD \rightarrow BC}(\ket{\psi})$, and let $\alpha \in (\demi, 1) \cup (1, \infty)$, with $\halpha$ such that $\frac{1}{\alpha} + \frac{1}{\halpha} = 2$, and let $\rho_{ABCD} = \proj{\psi}$ and $\sigma_C \in \D(\sfC)$. Then,
    \begin{align}
        \label{eqn:halpha-rho-tau} H_{\alpha}(AB|C)_{\rho|\sigma} &= -\log \sup_{\tau_D \in \D(\sfD)} \left\| \sigma_C^{\frac{-\alpha'}{2}} X \tau_D^{\frac{\alpha'}{2}} \right\|^{\frac{2}{\alpha'}}_2,\\
        \label{eqn:halpha-rho} H_{\alpha}(B|C)_{\rho|\sigma} &= -\log \left\| \sigma_{C}^{\frac{-\alpha'}{2}} X \right\|^{\frac{2}{\alpha'}}_{2\alpha},\\
        \label{eqn:halpha-tau} H_{\alpha}(A|BC)_{\rho} &= -\log \sup_{\tau_D \in \D(\sfD)} \left\| X  {\tau_D}^{\frac{\alpha'}{2}} \right\|^{\frac{2}{\alpha'}}_{2\hat{\alpha}}.
    \end{align}
\end{lemma}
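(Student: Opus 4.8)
The plan is to unfold each of the three identities directly from the definition of $H_\alpha$ in terms of $D_\alpha$, using the operator–vector correspondence to convert the traces of powers of sandwiched operators into Schatten norms of $X$. Let me think about the key manipulations.

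For~\eqref{eqn:halpha-rho}, start from $H_\alpha(B|C)_{\rho|\sigma} = -D_\alpha(\rho_{BC}\|\ident_B\otimes\sigma_C)$, where $\rho_{BC} = \tr_{AD}\proj{\psi}$. Write $\rho_{BC} = X X^\dagger$ where $X = X_{AD\to BC} = \Op_{AD\to BC}(\ket\psi)$, using the standard fact that partial-tracing out the input side of $\Op(\ket\psi)$ leaves $XX^\dagger$. Then
\[
\tr\bigl[(\sigma_C^{-\alpha'/2} XX^\dagger \sigma_C^{-\alpha'/2})^\alpha\bigr] = \bigl\| (\sigma_C^{-\alpha'/2} X)(\sigma_C^{-\alpha'/2} X)^\dagger \cdot \text{(something)} \bigr\|\dots
\]
— more carefully, $\sigma_C^{-\alpha'/2} XX^\dagger\sigma_C^{-\alpha'/2} = Y Y^\dagger$ with $Y = \sigma_C^{-\alpha'/2} X$ (since $\sigma_C$ acts only on $\sfC\subseteq\sfB\otimes\sfC$, and the convention $\sigma_C^{-\alpha'/2}X$ means $(\ident_B\otimes\sigma_C^{-\alpha'/2})X$), so $\tr[(YY^\dagger)^\alpha] = \|Y\|_{2\alpha}^{2\alpha}$ by definition of the Schatten "norm." Taking $\frac{1}{\alpha-1}\log$ and negating, and noting $\frac{1}{\alpha-1}\cdot 2\alpha\cdot(-1)$ rearranges via $\alpha' = \frac{\alpha-1}{\alpha}$ to the exponent $\frac{2}{\alpha'}$, gives~\eqref{eqn:halpha-rho}. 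The sign bookkeeping needs the case split $\alpha\lessgtr 1$, but it is routine.

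For~\eqref{eqn:halpha-rho-tau} the idea is the same but with the optimization over $\sigma_B$ hidden in the definition $H_\alpha(AB|C)_{\rho|\sigma}$ — wait, here $\sigma_C$ is fixed on the $C$ system and there is no optimization on $AB$, so actually we must be careful: $H_\alpha(AB|C)_{\rho|\sigma} = -D_\alpha(\rho_{ABC}\|\ident_{AB}\otimes\sigma_C)$. Now $\rho_{ABC} = \tr_D\proj\psi$, and purifying/regrouping: with $X = X_{AD\to BC}$, one has $\rho_{ABC} = \tr_D[\,\cdots\,]$, and the purification of $\rho_{ABC}$ on $D$ lets us write, for the relevant trace, $\tr[(\sigma_C^{-\alpha'/2}\rho_{ABC}\sigma_C^{-\alpha'/2})^\alpha]$ as a supremum over $\tau_D\in\D(\sfD)$ of a $2$-norm expression; the appearance of $\sup_{\tau_D}$ comes from the variational formula $\|Z\|_{2\alpha}^{2} = \sup_{\tau} |\tr[Z\sqrt{\tau}\,]|$-type dualities relating Schatten $2\alpha$-norms of $\tr_D$-reductions to weighted $2$-norms — concretely, $\tr[(M)^\alpha]^{1/\alpha} = \sup_{\tau_D\in\D(\sfD)}\tr[(M\otimes\ident_D)(\ident\otimes\tau_D)^{1-1/\alpha}\cdots]$. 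The cleanest route is: $\sigma_C^{-\alpha'/2}X$ has $(\sigma_C^{-\alpha'/2}X)(\sigma_C^{-\alpha'/2}X)^\dagger = \sigma_C^{-\alpha'/2}\rho_{BC}\sigma_C^{-\alpha'/2}$ after tracing $AD$; to instead get $\rho_{ABC}$ we keep $A$ and trace only $D$, giving $\tr_D[\,\cdot\,]$ which forces the $\sup_{\tau_D}$ through the identity $\tr[(\tr_D W)^\alpha] = \sup_{\tau_D}\tr[W^{1/2}(\ident\otimes\tau_D^{\alpha'})W^{1/2}]^{\alpha}/\dots$ Honestly the right tool is: for $W\geqslant 0$ on a bipartite space, $\|\tr_D W\|_\alpha = \sup_{\tau_D\in\D}\|W^{1/2}(\ident\otimes\tau_D)^{1/2}\cdot\dots\|$, i.e. a minimax identity; I would cite or reprove the fact that $\tr[(\tr_D W)^\alpha]^{1/\alpha} = \sup_{\tau_D} \tr[W(\ident\otimes\tau_D^{-\alpha'})]$ for $\alpha>1$ (Sion minimax + the variational formula for $\|\cdot\|_\alpha$), with the reversed optimization for $\alpha<1$. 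Applying this with $W = YY^\dagger$, $Y = \sigma_C^{-\alpha'/2}X_{AD\to BC}\tau_D^{?}$ and matching exponents yields the stated $\bigl\|\sigma_C^{-\alpha'/2}X\tau_D^{\alpha'/2}\bigr\|_2^{2/\alpha'}$.

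Finally~\eqref{eqn:halpha-tau} follows from~\eqref{eqn:halpha-rho-tau} by duality: apply Fact~\ref{fact:duality} to the pure state $\ket\psi_{ABCD}$ in the split $A\,|\,BC$ versus $A\,|\,D$, giving $H_\alpha(A|BC)_\rho = -H_{\halpha}(A|D)_\rho$, then express $H_{\halpha}(A|D)_{\rho|\tau}$ by the analogue of~\eqref{eqn:halpha-rho} (with $D$ playing the role of the conditioning system, $\hat\alpha$ the parameter, and $X^\dagger$ or its transpose in place of $X$), and use $\hat\alpha' = -\alpha'$ together with $\|\,\cdot\,\|_{2\halpha}$ to land on the claimed formula; the $\sup_{\tau_D}$ here is exactly the optimization over $\omega_D$ in the definition of the (unconditional-on-$\sigma$) entropy $H_\alpha(A|BC)_\rho$. \textbf{The main obstacle} I expect is the minimax/variational identity used in~\eqref{eqn:halpha-rho-tau}: getting the supremum over $\tau_D\in\D(\sfD)$ to appear with the correct exponent $\alpha'/2$ on $\tau_D$ and the correct outer power $2/\alpha'$, with the direction of optimization flipping between $\alpha>1$ and $\alpha<1$, and checking that the supremum is attained (so that compositions with the later interpolation theorem are legitimate). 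Everything else is careful tracking of the exponents $\alpha' = \frac{\alpha-1}{\alpha}$ through the logarithms.
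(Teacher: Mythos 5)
Your derivations of \eqref{eqn:halpha-rho} and \eqref{eqn:halpha-tau} are sound: writing $\rho_{BC}=XX^{\dagger}$ and reading off $\tr[(\sigma_C^{-\alpha'/2}\rho_{BC}\sigma_C^{-\alpha'/2})^{\alpha}]=\|\sigma_C^{-\alpha'/2}X\|_{2\alpha}^{2\alpha}$ is essentially the computation the paper performs in Lemma~\ref{lem:aux-expr-hab} (the paper instead obtains \eqref{eqn:halpha-rho} from \eqref{eqn:halpha-rho-tau} via Lemma~\ref{lem:supY}, but your direct route is fine), and your use of Fact~\ref{fact:duality} together with $\hat{\alpha}'=-\alpha'$ for \eqref{eqn:halpha-tau} matches the paper.

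The genuine gap is exactly where you flag it, in \eqref{eqn:halpha-rho-tau}. The tool you propose, ``$\tr[(\tr_D W)^{\alpha}]^{1/\alpha}=\sup_{\tau_D\in\D(\sfD)}\tr[W(\ident\otimes\tau_D^{-\alpha'})]$ for $W\geqslant 0$, by Sion minimax plus the variational formula for $\|\cdot\|_{\alpha}$'', fails on two counts. First, the weight must be $\tau_D^{+\alpha'}$ (with $\alpha'\in(0,1)$ when $\alpha>1$); with $-\alpha'$ the supremum is generically infinite. Second, no identity of this kind holds for general positive $W$: the variational formula (Lemma~\ref{lem:supY}) yields an optimization over the system on which $\tr_D W$ lives, not over the traced-out system $D$. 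For instance $W=\ident_E\otimes\ident_D/(d_Ed_D)$ gives $\|\tr_D W\|_{\alpha}=d_E^{-\alpha'}$ while $\sup_{\tau_D}\tr[W(\ident\otimes\tau_D^{\alpha'})]=d_D^{-\alpha'}$. The missing idea is that here $W=(\sigma_C^{-\alpha'/2}\otimes\ident_D)\proj{\psi}(\sigma_C^{-\alpha'/2}\otimes\ident_D)$ is rank one, so its two complementary partial traces are isospectral; only this lets you move the optimization onto the purifying system and obtain $\|\tr_D W\|_{\alpha}=\sup_{\tau_D}\bra{\psi}\sigma_C^{-\alpha'}\otimes\tau_D^{\alpha'}\ket{\psi}$ for $\alpha>1$ (infimum for $\alpha<1$). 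This is precisely the representation (equation (19) of \cite{mdsft13}) that the paper cites outright and then converts into the $2$-norm form via the operator--vector correspondence (Lemmas~\ref{lem:op-vec-xy} and~\ref{lem:op-vec-norm}). Once you either cite that representation or reprove it via the rank-one observation combined with Lemma~\ref{lem:supY}, the rest of your outline goes through; note also that attainment of the supremum over $\tau_D$, which you worry about, is not needed downstream, since the interpolation argument only uses the resulting inequality for each fixed $\tau_D$ before taking suprema on both sides.
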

\begin{proof}
    We start by proving (\ref{eqn:halpha-rho-tau}) from the represention in equation (19) from~\cite{mdsft13}:
    \[ H_{\alpha}(AB|C)_{\rho|\sigma} = \begin{cases}
            \frac{-1}{\alpha'} \log \inf_{\tau_D \in \D(\sfD)} \bra{\psi} \ident_{AB} \otimes \sigma_C^{-\alpha'} \otimes \tau_D^{\alpha'} \ket{\psi} & \text{if $\alpha<1$}\\
            \frac{-1}{\alpha'} \log \sup_{\tau_D \in \D(\sfD)} \bra{\psi} \ident_{AB} \otimes \sigma_C^{-\alpha'} \otimes \tau_D^{\alpha'} \ket{\psi} & \text{if $\alpha>1$},
    \end{cases} \]
    Putting the prefactor in the exponent, we get
    \[ H_{\alpha}(AB|C)_{\rho|\sigma} = -\log \sup_{\tau_D \in \D(\sfD)} \bra{\psi} \ident_{AB} \otimes \sigma_C^{-\alpha'} \otimes \tau_D^{\alpha'} \ket{\psi}^{\frac{1}{\alpha'}} \]
    for all $\alpha \in [\demi, 1) \cup (1, \infty)$.  Now, consider the vector $\sigma_C^{\frac{-\alpha'}{2}} \otimes \tau_D^{\frac{\alpha'}{2}} \ket{\psi}$: by Lemma~\ref{lem:op-vec-xy}, we have that
    \[ \Op_{AD \rightarrow BC}\left(\sigma_C^{\frac{-\alpha'}{2}} \otimes \tau_D^{\frac{\alpha'}{2}} \ket{\psi}\right) = \sigma_C^{\frac{-\alpha'}{2}} X {\tau_D^{\top}}^{\frac{\alpha'}{2}} \]
    and therefore
    \begin{align*}
        \bra{\psi} \ident_{AB} \otimes {\sigma_C}^{-\alpha'} \otimes \tau_D^{\alpha'} \ket{\psi}^{\frac{1}{\alpha'}} &= \left\| \Op_{AD \rightarrow BC}\left(\sigma_C^{\frac{-\alpha'}{2}} \otimes {\tau_D^{\top}}^{\frac{\alpha'}{2}} \ket{\psi}\right) \right\|_2^{\frac{2}{\alpha'}}\\
        &= \left\| \sigma_C^{\frac{-\alpha'}{2}} X {\tau_D^{\top}}^{\frac{\alpha'}{2}} \right\|_2^{\frac{2}{\alpha'}}
    \end{align*}
    by Lemma~\ref{lem:op-vec-norm}, from which~\eqref{eqn:halpha-rho-tau} follows. We now derive~\eqref{eqn:halpha-rho} by expressing $H_{\alpha}(B|C)_{\rho|\sigma}$ using~\eqref{eqn:halpha-rho-tau}:
    \begin{align*}
        2^{-H_{\alpha}(B|C)_{\rho|\sigma}} &= \sup_{\tau_{AD} \in \D(\sfA \otimes \sfD)} \left\| \sigma_{C}^{\frac{-\alpha'}{2}} X {\tau_{AD}}^{\frac{\alpha'}{2}} \right\|_2^{\frac{2}{\alpha'}}\\
        &= \sup_{\tau_{AD} \in \D(\sfA \otimes \sfD)} \tr\left[ X^{\dagger} \sigma_{C}^{-\alpha'} X {\tau_{AD}}^{\alpha'} \right]^{\frac{1}{\alpha'}}\\
        &= \left\| X^{\dagger} {\sigma_C}^{-\alpha'} X \right\|_{\alpha}^{\frac{1}{\alpha'}}\\
        &= \left\| {\sigma_C}^{\frac{-\alpha'}{2}} X \right\|^{\frac{2}{\alpha'}}_{2\alpha},
    \end{align*}
    where the third line follows from Lemma~\ref{lem:supY}, and~\eqref{eqn:halpha-rho} follows. Finally, we prove~\eqref{eqn:halpha-tau} via duality: 
    \begin{align*}
        H_{\alpha}(A|BC)_{\rho} &= -H_{\hat{\alpha}}(A|D)_{\rho}\\
        &= \log \inf_{\omega_D \in \D(\sfD)} \left\| X  {\omega_D}^{\frac{-\halpha'}{2}} \right\|_{2\halpha}^{\frac{2}{\halpha'}}\\
        &= \log \inf_{\omega_D \in \D(\sfD)} \left\| X  {\omega_D}^{\frac{\alpha'}{2}} \right\|_{2\halpha}^{\frac{-2}{\alpha'}}\\
        &= -\log \sup_{\omega_D \in \D(\sfD)} \left\| X  {\omega_D}^{\frac{\alpha'}{2}} \right\|_{2\halpha}^{\frac{2}{\alpha'}},
    \end{align*}
    where we used the fact that $\halpha' = -\alpha'$, and invoked Lemma~\ref{lem:aux-expr-hab} to get the third line. This concludes the proof.
\end{proof}

We are now ready to prove Theorem~\ref{thm:main-result}. We break it down into its two cases, with Proposition~\ref{prop:min-min-min} below corresponding to Equation~(\ref{eqn:first-case}) and Proposition~\ref{prop:min-min-max} further down corresponding to Equation~(\ref{eqn:second-case}). Note that we only prove the cases where $\alpha > 1$; the cases where $\alpha < 1$ then follow by applying the duality property (Fact \ref{fact:duality}) to all the terms.
\begin{proposition}[Theorem~\ref{thm:main-result}, all-same-sign case]\label{prop:min-min-min}
    Let $\rho_{ABC} \in \D(\sfA \otimes \sfB \otimes \sfC)$, and let $\alpha,\beta,\gamma > 1$ be such that $\frac{1}{\alpha'} = \frac{1}{\beta'} + \frac{1}{\gamma'}$. Then, we have that
    \[ H_{\alpha}(AB|C)_{\rho|\sigma} \geqslant H_{\beta}(A|BC)_{\rho} + H_{\gamma}(B|C)_{\rho|\sigma} \]
for any $\sigma_C \in \D(\sfC)$. In particular,
    \[ H_{\alpha}(AB|C)_{\rho} \geqslant H_{\beta}(A|BC)_{\rho} + H_{\gamma}(B|C)_{\rho}. \]
\end{proposition}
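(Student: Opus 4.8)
The plan is to pass to the multiplicative picture and apply the interpolation theorem (Theorem~\ref{thm:interpolation}) to an analytic family of operators extracted from Lemma~\ref{lem:renyi-expressions}. Since $t\mapsto 2^{-t}$ is strictly decreasing, the inequality to be proved is equivalent to
\[ 2^{-H_{\alpha}(AB|C)_{\rho|\sigma}} \leqslant 2^{-H_{\beta}(A|BC)_{\rho}}\cdot 2^{-H_{\gamma}(B|C)_{\rho|\sigma}}. \]
I would begin by fixing a purification $\ket{\psi}_{ABCD}$ of $\rho_{ABC}$ and setting $X = \Op_{AD\rightarrow BC}(\ket{\psi})$, so that Lemma~\ref{lem:renyi-expressions}, applied with $\alpha$, $\beta$ and $\gamma$ in turn, rewrites the three terms as
\[ 2^{-H_{\alpha}(AB|C)_{\rho|\sigma}} = \sup_{\tau_D}\bigl\|\sigma_C^{-\alpha'/2}X\tau_D^{\alpha'/2}\bigr\|_2^{2/\alpha'},\quad 2^{-H_{\gamma}(B|C)_{\rho|\sigma}} = \bigl\|\sigma_C^{-\gamma'/2}X\bigr\|_{2\gamma}^{2/\gamma'},\quad 2^{-H_{\beta}(A|BC)_{\rho}} = \sup_{\tau_D}\bigl\|X\tau_D^{\beta'/2}\bigr\|_{2\hat{\beta}}^{2/\beta'}, \]
the suprema running over $\D(\sfD)$ and all outer exponents being positive since $\alpha,\beta,\gamma>1$. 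By a standard approximation we may take $\sigma_C$ and the competing $\tau_D$ invertible; and if $\Supp\rho_C\nsubseteq\Supp\sigma_C$, both $H_{\alpha}(AB|C)_{\rho|\sigma}$ and $H_{\gamma}(B|C)_{\rho|\sigma}$ equal $-\infty$ and there is nothing to prove.

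Next, for a fixed invertible $\tau_D\in\D(\sfD)$, I would introduce the map $F:S\rightarrow\linop(\sfA\otimes\sfD,\sfB\otimes\sfC)$ given by
\[ F(z) = \sigma_C^{-z\gamma'/2}\,X\,\tau_D^{(1-z)\beta'/2}, \]
which is bounded and holomorphic on $S$, together with $p_0 = 2\hat{\beta}$, $p_1 = 2\gamma$ and $\theta = \alpha'/\gamma'$. The hypothesis $\tfrac{1}{\alpha'}=\tfrac{1}{\beta'}+\tfrac{1}{\gamma'}$ is exactly what makes this choice coherent: it forces $\theta = 1-\alpha'/\beta' = \beta'/(\beta'+\gamma')$, whence $0<\theta<1$ (using $\alpha'<\beta',\gamma'$), and a short computation with $\tfrac1{\hat{\beta}}=1+\beta'$ and $\tfrac1\gamma=1-\gamma'$ gives $\tfrac{1}{p_\theta}=\tfrac{1-\theta}{p_0}+\tfrac{\theta}{p_1}=\tfrac12$, so $p_\theta=2$; moreover $\tfrac{1-\theta}{\alpha'}=\tfrac1{\beta'}$ and $\tfrac{\theta}{\alpha'}=\tfrac1{\gamma'}$. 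Evaluating $F$: at $z=\theta$ one gets $F(\theta)=\sigma_C^{-\alpha'/2}X\tau_D^{\alpha'/2}$; on the line $\realpart z=0$ the factors $\sigma_C^{-it\gamma'/2}$ and $\tau_D^{-it\beta'/2}$ are unitary, so by unitary invariance of the Schatten norms $\|F(it)\|_{p_0}=\|X\tau_D^{\beta'/2}\|_{2\hat{\beta}}$; and on $\realpart z=1$, similarly $\|F(1+it)\|_{p_1}=\|\sigma_C^{-\gamma'/2}X\|_{2\gamma}$. Hence $M_0=\|X\tau_D^{\beta'/2}\|_{2\hat{\beta}}$ and $M_1=\|\sigma_C^{-\gamma'/2}X\|_{2\gamma}$.

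Applying Theorem~\ref{thm:interpolation} yields $\|\sigma_C^{-\alpha'/2}X\tau_D^{\alpha'/2}\|_2\leqslant M_0^{1-\theta}M_1^{\theta}$; raising to the power $2/\alpha'>0$ and using $\tfrac{1-\theta}{\alpha'}=\tfrac1{\beta'}$, $\tfrac{\theta}{\alpha'}=\tfrac1{\gamma'}$ gives
\[ \bigl\|\sigma_C^{-\alpha'/2}X\tau_D^{\alpha'/2}\bigr\|_2^{2/\alpha'} \leqslant \bigl\|X\tau_D^{\beta'/2}\bigr\|_{2\hat{\beta}}^{2/\beta'}\cdot\bigl\|\sigma_C^{-\gamma'/2}X\bigr\|_{2\gamma}^{2/\gamma'} \leqslant 2^{-H_{\beta}(A|BC)_{\rho}}\cdot 2^{-H_{\gamma}(B|C)_{\rho|\sigma}}, \]
where the last step bounds $\|X\tau_D^{\beta'/2}\|_{2\hat{\beta}}^{2/\beta'}$ by its supremum over $\tau_D$. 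Taking the supremum over $\tau_D$ on the left converts this into $2^{-H_{\alpha}(AB|C)_{\rho|\sigma}}\leqslant 2^{-H_{\beta}(A|BC)_{\rho}-H_{\gamma}(B|C)_{\rho|\sigma}}$, i.e.\ the inequality $H_{\alpha}(AB|C)_{\rho|\sigma}\geqslant H_{\beta}(A|BC)_{\rho}+H_{\gamma}(B|C)_{\rho|\sigma}$; the ``in particular'' statement then follows by choosing $\sigma_C$ optimal for $H_{\gamma}(B|C)_{\rho}$ and using $H_{\alpha}(AB|C)_{\rho}\geqslant H_{\alpha}(AB|C)_{\rho|\sigma}$. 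I expect the main obstacle to be the simultaneous calibration of the interpolating family $F$ and of the data $(\theta,p_0,p_1)$ so that the central $2$-norm, the two distinct endpoint norms $2\hat{\beta}$ and $2\gamma$, and the external exponents $2/\alpha',2/\beta',2/\gamma'$ all line up; it is precisely this calibration that singles out the constraint $\tfrac1{\alpha'}=\tfrac1{\beta'}+\tfrac1{\gamma'}$. The remainder is routine Schatten-norm bookkeeping together with the usual invertibility and support reductions.
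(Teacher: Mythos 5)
Your proposal is correct and follows essentially the same route as the paper: the same interpolating family $F(z)=\sigma_C^{-z\gamma'/2}X\tau_D^{(1-z)\beta'/2}$ with $p_0=2\hat{\beta}$, $p_1=2\gamma$, $\theta=\alpha'/\gamma'$, the same endpoint/unitarity computations, and the same final translation via Lemma~\ref{lem:renyi-expressions}. The only additions (invertibility/support reductions and the explicit check $0<\theta<1$) are harmless refinements of the paper's argument.
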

\begin{proof}
    Let $\ket{\psi}_{ABCD} \in \sfA \otimes \sfB \otimes \sfC \otimes \sfD$ be a purification of $\rho$, let $X_{AD \rightarrow BC} = \Op_{AD \rightarrow BC}(\ket{\psi})$, and let $\tau_D \in \D(\sfD)$. We use Theorem~\ref{thm:interpolation} with the following choices:
    \begin{align*}
        F(z) &= \sigma_C^{\frac{-z \gamma'}{2}} X \tau_D^{\frac{(1-z) \beta'}{2}}\\
        \frac{1}{p_0} &= 1 - \frac{1}{2\beta} = \frac{1}{2\hat{\beta}}\\
        p_1 &= 2\gamma\\
        \theta &= \frac{\alpha'}{\gamma'}.
    \end{align*}
    Now, from these choices, we can show that $1-\theta = \frac{\alpha'}{\beta'}$:
    \begin{align*}
        \frac{1-\theta}{\alpha'} &= \frac{1 - \alpha'/\gamma'}{\alpha'}\\
            &= \frac{1}{\alpha'} - \frac{1}{\gamma'}\\
            &= \frac{1}{\beta'}.
    \end{align*}
    Furthermore, we can show that $p_{\theta} = 2$:
    \begin{align}
        \label{eqn:ptheta} \frac{1}{p_{\theta}} &= \frac{1-\theta}{p_0} + \frac{\theta}{p_1}\\
        \nonumber &= \frac{\alpha'}{2 \hat{\beta} \beta'} + \frac{\alpha'}{2 \gamma' \gamma}\\
        \nonumber &= \frac{\alpha'}{2}\left( \frac{1}{\hat{\beta} \beta'} + \frac{1}{\gamma' \gamma} \right)\\
        \nonumber &\stackrel{(a)}{=} \frac{\alpha'}{2}\left( \frac{1 + \beta'}{\beta'} + \frac{1 - \gamma'}{\gamma'} \right)\\
        \nonumber &= \frac{\alpha'}{2}\left( \frac{1}{\beta'} + \frac{1}{\gamma'} \right)\\
        \nonumber &= \frac{1}{2},
    \end{align}
    where at line $(a)$, we have used the fact that $\frac{1}{\gamma} = 1 - \gamma'$, and that $\frac{1}{\hat{\beta}} = 2 - \frac{1}{\beta} = 1 + \beta'$. Finally, also note that $F(\theta) = \sigma_C^{\frac{-\alpha'}{2}} X \tau_D^{\frac{\alpha'}{2}}$, and that
    \begin{align*}
    M_0 &= \sup_{t \in \mbR} \left\| \sigma_C^{\frac{-it \gamma'}{2}} X \tau_D^{\frac{(1-it)\beta'}{2}} \right\|_{p_0}\\
    &= \left\| X \tau_D^{\frac{\beta'}{2}} \right\|_{p_0}
    \end{align*}
    due to the fact that $\sigma_C^{\frac{-it\gamma'}{2}}$ and $\tau_D^{\frac{-it\beta'}{2}}$ are unitary for all $t \in \mbR$. Likewise, 
    \[ M_1 = \left\| \sigma_C^{\frac{-\gamma'}{2}} X \right\|_{p_1}. \]
    Hence, for any choice of normalized $\sigma$ and $\tau$, we have from Theorem \ref{thm:interpolation} that
    \[ \left\| \sigma_C^{\frac{-\alpha'}{2}} X \tau_D^{\frac{\alpha'}{2}} \right\|_2 \leqslant \left\| X \tau_D^{\frac{\beta'}{2}} \right\|_{2\hat{\beta}}^{\frac{\alpha'}{\beta'}} \left\| \sigma_C^{\frac{-\gamma'}{2}} X \right\|_{2\gamma}^{\frac{\alpha'}{\gamma'}}. \]
    This leads to
    \begin{align*}
    \left\| \sigma_C^{\frac{-\alpha'}{2}} X \tau_D^{\frac{\alpha'}{2}} \right\|_2^{\frac{2}{\alpha'}} &\leqslant \left\| X \tau_D^{\frac{\beta'}{2}} \right\|_{2\hat{\beta}}^{\frac{2}{\beta'}} \left\| \sigma_C^{\frac{-\gamma'}{2}} X \right\|_{2\gamma}^{\frac{2}{\gamma'}},
    \end{align*}
    since $\alpha' > 0$. Maximizing over $\tau_D$ on both sides yields:
    \begin{align*}
    \sup_{\tau_D \in \D(\sfD)} \left\| \sigma_C^{\frac{-\alpha'}{2}} X \tau_D^{\frac{\alpha'}{2}} \right\|_2^{\frac{2}{\alpha'}} &\leqslant \sup_{\tau_D \in \D(\sfD)} \left\| X \tau_D^{\frac{\beta'}{2}} \right\|_{2\hat{\beta}}^{\frac{2}{\beta'}} \left\| \sigma_C^{\frac{-\gamma'}{2}} X \right\|_{2\gamma}^{\frac{2}{\gamma'}}.
    \end{align*}
    Finally, using Lemma~\ref{lem:renyi-expressions}, we get that
    \[ H_{\alpha}(AB|C)_{\rho|\sigma} \geqslant H_{\beta}(A|BC)_{\rho} + H_{\gamma}(B|C)_{\rho|\sigma}, \]
    as advertised.
\end{proof}

We now turn to the second case of Theorem~\ref{thm:main-result}, corresponding to Equation~(\ref{eqn:second-case}), using essentially the same proof:
\begin{proposition}[Theorem~\ref{thm:main-result}, mixed-sign case]\label{prop:min-min-max}
    Let $\rho_{ABC} \in \D(\sfA \otimes \sfB \otimes \sfC)$, and let $\alpha'>0$, and $\beta'$ and $\gamma'$ have opposite signs, and be such that $\frac{1}{\alpha'} = \frac{1}{\beta'} + \frac{1}{\gamma'}$. Then, we have that
    \[ H_{\alpha}(AB|C)_{\rho|\sigma} \leqslant H_{\beta}(A|BC)_{\rho} + H_{\gamma}(B|C)_{\rho|\sigma} \]
    for every $\sigma_C \in \D(\sfC)$. In particular,
    \[ H_{\alpha}(AB|C)_{\rho} \leqslant H_{\beta}(A|BC)_{\rho} + H_{\gamma}(B|C)_{\rho}. \]
\end{proposition}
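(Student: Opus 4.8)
The plan is to run the argument of Proposition~\ref{prop:min-min-min} essentially unchanged, applying Theorem~\ref{thm:interpolation} ``the other way around''. In Proposition~\ref{prop:min-min-min} the operator placed at the interior point $z=\theta$ of the strip is the one computing $H_\alpha(AB|C)$; here that choice is unavailable, since the corresponding $\theta=\alpha'/\gamma'$ would be negative once $\beta'$ and $\gamma'$ have opposite signs. The fix is to note that, since $\alpha'>0$ and $\beta',\gamma'$ have opposite signs, exactly one of $\beta',\gamma'$ is positive, and to place \emph{that} outer quantity at $z=\theta$, with $H_\alpha(AB|C)$ and the remaining outer quantity at the two endpoints. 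I treat the sub-case $\beta'>0>\gamma'$; the sub-case $\gamma'>0>\beta'$ is identical after swapping the roles of the two outer terms. As in Proposition~\ref{prop:min-min-min}, let $\ket{\psi}_{ABCD}$ be a purification of $\rho$, set $X_{AD\to BC}=\Op_{AD\to BC}(\ket{\psi})$, and fix $\sigma_C\in\D(\sfC)$ and $\tau_D\in\D(\sfD)$.

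Apply Theorem~\ref{thm:interpolation} with
\begin{align*}
  F(z) &= \sigma_C^{\frac{1}{2}(-\alpha'+z(\alpha'-\gamma'))}\,X\,\tau_D^{\frac{\alpha'}{2}(1-z)},\\
  \frac{1}{p_0} &= \frac{1}{2}, \qquad \frac{1}{p_1} = \frac{1}{2\gamma}, \qquad \theta = \frac{\alpha'}{\alpha'-\gamma'}.
\end{align*}
Since $\gamma'<0$ we have $\alpha'-\gamma'>\alpha'>0$, so $\theta\in(0,1)$. Using the relation $\frac{1}{\beta'}=\frac{1}{\alpha'}-\frac{1}{\gamma'}$ (equivalently $\beta'=\frac{-\alpha'\gamma'}{\alpha'-\gamma'}$) together with $\frac{1}{\hat\beta}=1+\beta'$ and $\frac{1}{\gamma}=1-\gamma'$, one checks by the same kind of manipulation as in~\eqref{eqn:ptheta} that $1-\theta=\frac{-\gamma'}{\alpha'-\gamma'}$, that the $\sigma_C$-exponent of $F(\theta)$ is $0$ while its $\tau_D$-exponent is $\beta'/2$, so that $F(\theta)=X\tau_D^{\beta'/2}$, and that $p_\theta=2\hat\beta$. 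Because $\sigma_C^{it}$ and $\tau_D^{it}$ are unitary, $M_0=\snorm{\sigma_C^{-\alpha'/2}X\tau_D^{\alpha'/2}}_2$ and $M_1=\snorm{\sigma_C^{-\gamma'/2}X}_{2\gamma}$, so Theorem~\ref{thm:interpolation} gives
\[
  \snorm{X\tau_D^{\beta'/2}}_{2\hat\beta} \;\leqslant\; \snorm{\sigma_C^{-\alpha'/2}X\tau_D^{\alpha'/2}}_2^{\,1-\theta}\,\snorm{\sigma_C^{-\gamma'/2}X}_{2\gamma}^{\,\theta}.
\]

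Raising this to the power $\frac{2}{\beta'}>0$ and substituting $\frac{1-\theta}{\beta'}=\frac{1}{\alpha'}$ and $\frac{\theta}{\beta'}=-\frac{1}{\gamma'}$ yields $\snorm{X\tau_D^{\beta'/2}}_{2\hat\beta}^{2/\beta'}\,\snorm{\sigma_C^{-\gamma'/2}X}_{2\gamma}^{2/\gamma'}\leqslant\snorm{\sigma_C^{-\alpha'/2}X\tau_D^{\alpha'/2}}_2^{2/\alpha'}$. Now take the supremum over $\tau_D\in\D(\sfD)$ — the $\gamma$-factor on the left is $\tau_D$-independent, so the left-hand supremum factorizes — and read off the three terms from Lemma~\ref{lem:renyi-expressions} (as $2^{-H_\beta(A|BC)_\rho}$, $2^{-H_\gamma(B|C)_{\rho|\sigma}}$ and $2^{-H_\alpha(AB|C)_{\rho|\sigma}}$, via~\eqref{eqn:halpha-tau}, \eqref{eqn:halpha-rho} and~\eqref{eqn:halpha-rho-tau}); this is exactly $H_\alpha(AB|C)_{\rho|\sigma}\leqslant H_\beta(A|BC)_\rho+H_\gamma(B|C)_{\rho|\sigma}$. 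For the ``in particular'' statement, choose $\sigma_C$ optimal for $H_\gamma(B|C)_\rho$, observe that the right-hand side is then independent of $\sigma_C$, and take the supremum over $\sigma_C$ on the left. The sub-case $\gamma'>0>\beta'$ goes through verbatim with $F(z)=\sigma_C^{-\frac{\alpha'}{2}(1-z)}X\tau_D^{\frac{1}{2}((1-z)\alpha'+z\beta')}$, $\frac{1}{p_0}=\frac{1}{2}$, $\frac{1}{p_1}=\frac{1}{2\hat\beta}$, $\theta=\frac{\alpha'}{\alpha'-\beta'}$, so that $F(\theta)=\sigma_C^{-\gamma'/2}X$ with $p_\theta=2\gamma$, and the same rearrangement.

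Essentially all of this is the bookkeeping already done in Proposition~\ref{prop:min-min-min}; the only genuinely new step is choosing the interpolation data so that the ``right'' outer term lands at $z=\theta$. The one thing to be careful about — and where a sign slip would be easy — is that, $\gamma'$ now being negative, raising through the power $\theta$ and then $\frac{2}{\beta'}$ produces the exponent $\frac{2}{\gamma'}<0$ on the $H_\gamma$-factor, which is precisely what moves $H_\gamma(B|C)$ onto the same side as $H_\beta(A|BC)$ and reverses the direction of the final inequality relative to Proposition~\ref{prop:min-min-min}. I do not anticipate any real obstacle beyond this: Theorem~\ref{thm:interpolation}, the identities in Lemma~\ref{lem:renyi-expressions}, and the Hölder relation $\frac{1}{\alpha'}=\frac{1}{\beta'}+\frac{1}{\gamma'}$ do all the work.
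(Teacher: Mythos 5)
Your proof is correct and is essentially the paper's own proof: your interpolation choices coincide exactly with the paper's in both sub-cases once one substitutes $\beta'=\frac{-\alpha'\gamma'}{\alpha'-\gamma'}$ (resp.\ $\gamma'=\frac{-\alpha'\beta'}{\alpha'-\beta'}$) into the exponents and $\theta$, and the subsequent raising to the power $2/\beta'$ (resp.\ $2/\gamma'$), rearrangement, supremum over $\tau_D$, and application of Lemma~\ref{lem:renyi-expressions} are the same steps. The only loose phrasing is the ``in particular'' step, which is cleaner stated as: bound $H_{\gamma}(B|C)_{\rho|\sigma}\leqslant H_{\gamma}(B|C)_{\rho}$ in the per-$\sigma$ inequality and then take the supremum over $\sigma_C$ on the left-hand side — but this is immediate.
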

\begin{proof}
    Let $\ket{\psi}_{ABCD} \in \sfA \otimes \sfB \otimes \sfC \otimes \sfD$ be a purification of $\rho$, let $X_{AD \rightarrow BC} = \Op_{AD \rightarrow BC}(\ket{\psi})$, and let $\tau_D \in \D(\sfD)$. We split the proof into two cases: first, we assume that $\beta'>0$. We use Theorem~\ref{thm:interpolation} with the following choices:
    \begin{align*}
        F(z) &= \sigma_C^{\frac{-\alpha'}{2} - z \frac{\alpha' \gamma'}{2\beta'}} X \tau_D^{\frac{\alpha'}{2} - z \frac{\alpha'}{2}}\\
        p_0 &= 2\\
        p_1 &= 2\gamma\\
        \theta &= \frac{-\beta'}{\gamma'}
    \end{align*}
    Now, note that $F(\theta) = X \tau_D^{\frac{\beta'}{2}}$, and that $1-\theta = \frac{\beta'}{\alpha'}$ and $p_{\theta} = 2\hat\beta$, as can be seen by a calculation very similar to Equation~(\ref{eqn:ptheta}). Furthermore, we have that:
    \begin{align*}
        M_0 &= \sup_{t \in \mbR} \left\| \sigma_C^{\frac{-\alpha'}{2} - it \frac{\alpha'\gamma'}{2\beta'}} X \tau_D^{\frac{\alpha'}{2} - it \frac{\alpha'}{2}} \right\|_{2}\\
        &= \left\| \sigma_C^{\frac{-\alpha'}{2}}X \tau_D^{\frac{\alpha'}{2}} \right\|_{2}
    \end{align*}
    due to the fact that $\sigma_C^{\frac{-it\alpha'\gamma'}{2\beta'}}$ and $\tau_D^{\frac{-it\alpha'}{2}}$ are unitary for all $t \in \mbR$. Likewise, 
    \[ M_1 = \left\| \sigma_C^{\frac{-\gamma'}{2}} X \right\|_{2\gamma}. \]
    Hence, for any choice of normalized $\sigma$ and $\tau$, we have that
    \[ \left\| X \tau_D^{\frac{\beta'}{2}} \right\|_{2\hat\beta} \leqslant \left\| \sigma_C^{\frac{-\alpha'}{2}} X \tau_D^{\frac{\alpha'}{2}} \right\|_{2}^{\frac{\beta'}{\alpha'}} \left\| \sigma_C^{\frac{-\gamma'}{2}} X \right\|_{2\gamma}^{\frac{-\beta'}{\gamma'}}. \]
    Now, since $\beta' > 0$, this leads to:
    \[ \left\| X \tau_D^{\frac{\beta'}{2}} \right\|_{2\hat\beta}^{\frac{2}{\beta'}} \leqslant \left\| \sigma_C^{\frac{-\alpha'}{2}} X \tau_D^{\frac{\alpha'}{2}} \right\|_{2}^{\frac{2}{\alpha'}} \left\| \sigma_C^{\frac{-\gamma'}{2}} X \right\|_{2\gamma}^{\frac{-2}{\gamma'}}. \]
    and therefore
    \begin{align*}
        \sup_{\tau_D \in \D(\sfD)} \left\| X \tau_D^{\frac{\beta'}{2}} \right\|_{2\hat\beta}^{\frac{2}{\beta'}} &\leqslant \sup_{\tau_D \in \D(\sfD)} \left\| \sigma_C^{\frac{-\alpha'}{2}} X \tau_D^{\frac{\alpha'}{2}} \right\|_{2}^{\frac{2}{\alpha'}} \left\| \sigma_C^{\frac{-\gamma'}{2}} X \right\|_{2\gamma}^{\frac{-2}{\gamma'}}.
    \end{align*}
    Using Lemma~\ref{lem:renyi-expressions}, we get that
    \[ H_{\beta}(A|BC)_{\rho} \geqslant H_{\alpha}(AB|C)_{\rho|\sigma} - H_{\gamma}(B|C)_{\rho|\sigma}, \]
    or,
    \[ H_{\alpha}(AB|C)_{\rho|\sigma} \leqslant H_{\beta}(A|BC)_{\rho} + H_{\gamma}(B|C)_{\rho|\sigma}. \]

    We now turn to the case where $\beta' < 0$ (and therefore $\gamma' > 0$). We again use Theorem~\ref{thm:interpolation}, but with these choices:
    \begin{align*}
        F(z) &= \sigma_C^{\frac{-\alpha'}{2} + z \frac{\alpha'}{2}} X \tau_C^{\frac{\alpha'}{2} + z \frac{\alpha'\beta'}{2\gamma'}}\\
        p_0 &= 2\\
        p_1 &= 2\hat\beta\\
        \theta &= \frac{-\gamma'}{\beta'}
    \end{align*}
    Now, note that $F(\theta) = \sigma_C^{\frac{-\gamma'}{2}} X$, that $1-\theta = \frac{\gamma'}{\alpha'}$, and that $p_{\theta} = 2\gamma$ (see again Equation (\ref{eqn:ptheta}) for a similar calculation), and that
    \begin{align*}
        M_0 &= \sup_{t \in \mbR} \left\| \sigma_C^{\frac{-\alpha'}{2} - it \frac{\alpha'}{2}} X \tau_D^{\frac{\alpha'}{2} - it \frac{\alpha'\beta'}{2\gamma'}} \right\|_{2}\\
        &= \left\| \sigma_C^{\frac{-\alpha'}{2}}X \tau_D^{\frac{\alpha'}{2}} \right\|_{2}
    \end{align*}
    due to the fact that $\sigma_C^{\frac{-it\alpha'}{2}}$ and $\tau_D^{\frac{-it\alpha'\beta'}{2\gamma'}}$ are unitary for all $t \in \mbR$. Likewise, 
    \[ M_1 = \left\| X \tau_D^{\frac{\beta'}{2}} \right\|_{2\hat\beta}. \]
    Hence, for any choice of normalized $\sigma$ and $\tau$, we have that
    \[ \left\| \sigma_C^{\frac{-\gamma'}{2}} X  \right\|_{2\gamma} \leqslant \left\| \sigma_C^{\frac{-\alpha'}{2}} X \tau_D^{\frac{\alpha'}{2}} \right\|_{2}^{\frac{\gamma'}{\alpha'}} \left\|  X \tau_D^{\frac{\beta'}{2}} \right\|_{2\hat\beta}^{\frac{-\gamma'}{\beta'}}. \]
    Since $\gamma' > 0$, this leads to
    \[ \left\| \sigma_C^{\frac{-\gamma'}{2}} X  \right\|_{2\gamma}^{\frac{2}{\gamma'}} \leqslant \left\| \sigma_C^{\frac{-\alpha'}{2}} X \tau_D^{\frac{\alpha'}{2}} \right\|_{2}^{\frac{2}{\alpha'}} \left\|  X \tau_D^{\frac{\beta'}{2}} \right\|_{2\hat\beta}^{\frac{-2}{\beta'}}. \]
    Moving the rightmost term to the left-hand side and maximizing over $\tau_D$ on both sides, we get:
    \begin{align*}
        \left\| \sigma_C^{\frac{-\gamma'}{2}} X  \right\|_{2\gamma}^{\frac{2}{\gamma'}} \left( \sup_{\tau_D \in \D(\sfD)} \left\|  X \tau_D^{\frac{\beta'}{2}} \right\|_{2\hat\beta}^{\frac{2}{\beta'}} \right) &\leqslant \sup_{\tau_D \in \D(\sfD)} \left\| \sigma_C^{\frac{-\alpha'}{2}} X \tau_D^{\frac{\alpha'}{2}} \right\|_{2}^{\frac{2}{\alpha'}}.
    \end{align*}
    Using Lemma~\ref{lem:renyi-expressions}, we get that
    \[ H_{\gamma}(B|C)_{\rho|\sigma} + H_{\beta}(A|BC)_{\rho} \geqslant H_{\alpha}(AB|C)_{\rho|\sigma}. \]
    This concludes the proof.
\end{proof}


\section*{Acknowledgments}
The author would like to thank Marco Tomamichel, Salman Beigi, Andreas Winter, Ciara Morgan, and Renato Renner for fruitful discussions and comments. The author acknowledges the support of the Czech Science Foundation GA CR project P202/12/1142 and the support of the EU FP7 under grant agreement no 323970 (RAQUEL). Part of this work was performed while the author was at Aarhus University, where he was supported by the Danish National Research Foundation and The National Science Foundation of China (under the grant 61061130540) for the Sino-Danish Center for the Theory of Interactive Computation, and also by the CFEM research centre (supported by the Danish Strategic Research Council).

\appendix

\section{Auxilliary lemmas}

\begin{lemma}\label{lem:aux-expr-hab}
    Let $\rho_{AB} \in \D(\sfA \otimes \sfB)$. Then, we have that
    \[ H_{\alpha}(A|B) = -\log \inf_{\sigma_B \in \D(\sfB)} \left\| X \sigma_{B}^{\frac{-\alpha'}{2}} \right\|_{2\alpha}^{\frac{2}{\alpha'}}, \]
    where $X := \Op_{AB \rightarrow D}(\ket{\psi})$ for a purification $\ket{\psi}_{ABD}$ of $\rho$.
\end{lemma}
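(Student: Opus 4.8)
The plan is to mirror the derivation of~\eqref{eqn:halpha-rho} in Lemma~\ref{lem:renyi-expressions}; the only structural difference is that here the operator being optimised acts on the \emph{domain} of $X$ rather than on its range, so the operator--vector correspondence inserts a transpose. Concretely, I would first fix $\sigma_B\in\D(\sfB)$ and prove the pointwise identity
\[
    \left\| X \sigma_B^{\frac{-\alpha'}{2}} \right\|_{2\alpha}^{\frac{2}{\alpha'}} = 2^{-H_{\alpha}(A|B)_{\rho|\sigma_B^{\top}}}.
\]
To get this, unfold the definition of $D_\alpha$: since $\rho_{AB}$ is normalised and $\sigma_B$ carries the usual implicit $\ident_A$, one has $2^{-H_{\alpha}(A|B)_{\rho|\omega_B}} = 2^{D_{\alpha}(\rho_{AB}\|\ident_A\otimes\omega_B)} = \tr\!\big[\big(\omega_B^{-\alpha'/2}\rho_{AB}\,\omega_B^{-\alpha'/2}\big)^{\alpha}\big]^{\frac{1}{\alpha-1}}$, and $\tfrac{1}{\alpha-1}=\tfrac{1}{\alpha\alpha'}=\tfrac{2}{\alpha'}\cdot\tfrac{1}{2\alpha}$. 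On the other hand, by the operator--vector identities used in the proof of Lemma~\ref{lem:renyi-expressions} (operators on the input system of $X$ pick up a transpose under $\Op$), we have $X^{\dagger}X=\rho_{AB}^{\top}$, the transpose taken in the computational basis of $\sfA\otimes\sfB$, so $\big\|X\sigma_B^{-\alpha'/2}\big\|_{2\alpha}^{2\alpha}=\tr\!\big[\big(\sigma_B^{-\alpha'/2}\rho_{AB}^{\top}\sigma_B^{-\alpha'/2}\big)^{\alpha}\big]$. Transposing the self-adjoint argument of the power, using $\tr[M^{\alpha}]=\tr[(M^{\top})^{\alpha}]$ and $(\ident_A\otimes\sigma_B)^{\top}=\ident_A\otimes\sigma_B^{\top}$, rewrites this as $\tr\!\big[\big((\sigma_B^{\top})^{-\alpha'/2}\rho_{AB}(\sigma_B^{\top})^{-\alpha'/2}\big)^{\alpha}\big]=2^{(\alpha-1)D_{\alpha}(\rho_{AB}\|\ident_A\otimes\sigma_B^{\top})}$; raising to the power $\tfrac{1}{\alpha-1}$ yields the claimed identity (everything else being the Schatten identity $\|Z\|_{2p}^{2}=\|Z^{\dagger}Z\|_{p}$ and matching of exponents).

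\textbf{Optimising.} Next I would take the infimum over $\sigma_B\in\D(\sfB)$ on both sides. Because $\sigma_B\mapsto\sigma_B^{\top}$ is a bijection of $\D(\sfB)$ and $x\mapsto 2^{-x}$ is decreasing,
\[
    \inf_{\sigma_B\in\D(\sfB)} \left\| X \sigma_B^{\frac{-\alpha'}{2}} \right\|_{2\alpha}^{\frac{2}{\alpha'}} = \inf_{\omega_B\in\D(\sfB)} 2^{-H_{\alpha}(A|B)_{\rho|\omega_B}} = 2^{-\sup_{\omega_B\in\D(\sfB)} H_{\alpha}(A|B)_{\rho|\omega_B}} = 2^{-H_{\alpha}(A|B)_{\rho}},
\]
the last step by the definition $H_{\alpha}(A|B)_{\rho}=-\inf_{\omega_B}D_{\alpha}(\rho_{AB}\|\ident_A\otimes\omega_B)=\sup_{\omega_B}H_{\alpha}(A|B)_{\rho|\omega_B}$. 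Taking $-\log$ of both sides is exactly the statement of the lemma.

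\textbf{Main obstacle.} No single computation is hard here; the care goes into two pieces of bookkeeping. First, the sign of $\alpha'=\tfrac{\alpha-1}{\alpha}$: it is negative for $\alpha\in(\demi,1)$, so the exponent $\tfrac{2}{\alpha'}$ is negative there and one might fear that raising to it flips the infimum into a supremum; the resolution is that the prefactor $\tfrac{1}{\alpha-1}$ in $D_{\alpha}$ is negative in precisely the same regime, so the two sign flips cancel and the formula holds uniformly in $\alpha$ with no case split --- working throughout at the level of $2^{-H}$ rather than $H$ itself is what keeps this transparent. Second, the transpose: the operator--vector map gives $X^{\dagger}X=\rho_{AB}^{\top}$ rather than $\rho_{AB}$, but this is absorbed harmlessly, either by the substitution $\sigma_B\leftrightarrow\sigma_B^{\top}$ above or, equivalently, by the observation that $H_{\alpha}(A|B)_{\rho^{\top}}=H_{\alpha}(A|B)_{\rho}$, which follows from transpose--invariance of the trace. (As a sanity check, this lemma is used in Lemma~\ref{lem:renyi-expressions} to pass from the second to the third line of the derivation of~\eqref{eqn:halpha-tau}, where it is applied with $X=\Op_{BC\to D}(\ket\psi)$, so the present statement is exactly in the form needed there.)
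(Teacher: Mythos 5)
Your proof is correct and follows essentially the same route as the paper: unfold $D_\alpha$, use $\|Z\|_{2\alpha}^2=\|Z^\dagger Z\|_\alpha$ with $X^\dagger X$, and absorb the optimisation over $\sigma_B$ at the level of $2^{-H}$. Your extra bookkeeping on the transpose is in fact slightly more careful than the paper, which simply writes $\rho_{AB}=X^\dagger X$ (true only up to a transpose in general); as you note, this is harmless because the map $\sigma_B\mapsto\sigma_B^{\top}$ is a bijection of $\D(\sfB)$ and the infimum absorbs it.
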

\begin{proof}
    First, we write
    \begin{align*}
        H_{\alpha}(A|B)_{\rho|\sigma} &= -D_{\alpha}(\rho_{AB} \| \ident_A \otimes \sigma_B)\\
        &= \frac{1}{1-\alpha} \log \tr\left[ \left( \sigma_B^{\frac{-\alpha'}{2}} \rho \sigma_B^{\frac{-\alpha'}{2}} \right)^{\alpha} \right]\\
        &= \frac{-1}{\alpha'} \log \left\| \sigma_B^{\frac{-\alpha'}{2}} \rho \sigma_B^{\frac{-\alpha'}{2}} \right\|_{\alpha}\\
        &= -\log \left\| \sigma_B^{\frac{-\alpha'}{2}} X^{\dagger} X \sigma_B^{\frac{-\alpha'}{2}} \right\|^{\frac{1}{\alpha'}}_{\alpha}\\
        &= -\log \left\| X \sigma_B^{\frac{-\alpha'}{2}} \right\|^{\frac{2}{\alpha'}}_{2\alpha}\\
    \end{align*}
    where we have used the fact that $\rho_{AB} = X^{\dagger} X$.
\end{proof}

\begin{lemma}\label{lem:supY}
    Let $\alpha \in [\demi, 1) \cup (1, \infty)$. Then, for any $X \in \Pos(\sfA)$, we have that
        \[ \| X \|_{\alpha} = \sup_{Y \in \D(\sfA)} \tr[Y^{\alpha'} X] \]
        if $\alpha > 1$, and
        \[ \| X \|_{\alpha} = \inf_{Y \in \D(\sfA)} \tr[Y^{\alpha'} X] \]
        if $\alpha < 1$. (As usual, $\alpha' = \frac{\alpha-1}{\alpha}$.) In particular, this means that
        \[ \| X \|_{\alpha}^{\frac{1}{\alpha'}} = \sup_{Y \in \D(\sfA)} \tr[Y^{\alpha'} X]^{\frac{1}{\alpha'}}. \]
\end{lemma}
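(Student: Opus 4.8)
The plan is to establish the identity as a pair of matching inequalities: one direction by inserting an explicit (near-)optimal $Y$, the other from a Hölder-type inequality. Throughout one may assume $X \neq 0$ (otherwise both sides vanish) and use $\|X\|_{\alpha} = \tr[X^{\alpha}]^{1/\alpha}$, valid since $X \geqslant 0$. The final assertion then follows for free: for $\alpha > 1$ the map $t \mapsto t^{1/\alpha'}$ is increasing (as $\alpha' > 0$), so $\sup_{Y} \tr[Y^{\alpha'}X]^{1/\alpha'} = \big(\sup_{Y} \tr[Y^{\alpha'}X]\big)^{1/\alpha'} = \|X\|_{\alpha}^{1/\alpha'}$; for $\alpha < 1$ it is decreasing (as $\alpha' < 0$), which turns the infimum into a supremum with the same conclusion.

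\emph{The case $\alpha > 1$} (so $\alpha' \in (0,1)$). For the inequality ``$\geqslant$'' I would plug in $Y := X^{\alpha}/\tr[X^{\alpha}] \in \D(\sfA)$; then $Y^{\alpha'} = X^{\alpha\alpha'}/\tr[X^{\alpha}]^{\alpha'} = X^{\alpha-1}/\tr[X^{\alpha}]^{\alpha'}$, and a one-line computation gives $\tr[Y^{\alpha'}X] = \tr[X^{\alpha}]^{1-\alpha'} = \tr[X^{\alpha}]^{1/\alpha} = \|X\|_{\alpha}$, so the supremum is at least $\|X\|_{\alpha}$. For ``$\leqslant$'', Hölder's inequality for Schatten norms applied with the exponents $\tfrac{1}{\alpha'}$ and $\alpha$ (Hölder conjugates, since $\alpha' + \tfrac1\alpha = 1$, and both $>1$) gives, for every $Y \in \D(\sfA)$, $\tr[Y^{\alpha'}X] \leqslant \|Y^{\alpha'}\|_{1/\alpha'}\,\|X\|_{\alpha} = \tr[Y]^{\alpha'}\|X\|_{\alpha} = \|X\|_{\alpha}$. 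Together these give the claim.

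\emph{The case $\alpha \in [\demi,1)$} (so $\alpha' \in [-1,0)$). Here the two inequalities swap roles. For ``$\leqslant$'' I would use $Y$ close to $X^{\alpha}/\tr[X^{\alpha}]$, regularized as $Y_{\varepsilon} := (1-\varepsilon)X^{\alpha}/\tr[X^{\alpha}] + (\varepsilon/\dim\sfA)\,\ident_{\sfA} \in \D(\sfA)$ to keep it invertible (so that $Y_\varepsilon^{\alpha'}$ is unambiguous), and check $\tr[Y_{\varepsilon}^{\alpha'}X] \to \|X\|_{\alpha}$ as $\varepsilon \to 0^{+}$. For ``$\geqslant$'', i.e.\ $\tr[Y^{\alpha'}X] \geqslant \|X\|_{\alpha}$ for every $Y \in \D(\sfA)$, the statement is a non-commutative reverse Hölder inequality, which I would prove by reducing to the commuting case: let $\mathcal{P}$ be the pinching onto the spectral projections of $X$; since $t \mapsto t^{\alpha'}$ is operator convex on $(0,\infty)$ for $\alpha' \in [-1,0]$, the operator Jensen inequality yields $\mathcal{P}(Y)^{\alpha'} \leqslant \mathcal{P}(Y^{\alpha'})$, hence $\tr[Y^{\alpha'}X] = \tr[\mathcal{P}(Y^{\alpha'})X] \geqslant \tr[\mathcal{P}(Y)^{\alpha'}X]$ (using $\mathcal{P}(X) = X$ and self-adjointness of $\mathcal{P}$ in the trace inner product), with $\mathcal{P}(Y)$ commuting with $X$. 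Diagonalizing $X = \sum_i \lambda_i \proj{e_i}$ and $\mathcal{P}(Y) = \sum_i \mu_i \proj{e_i}$ in a common eigenbasis, the inequality becomes the scalar reverse Hölder inequality $\sum_i \lambda_i \mu_i^{\alpha'} \geqslant (\sum_i \lambda_i^{\alpha})^{1/\alpha}$ for probability vectors $(\mu_i)$, which follows from ordinary Hölder with exponents $1/\alpha$ and $1/(1-\alpha)$ applied to the sequences $\big((\lambda_i\mu_i^{\alpha'})^{\alpha}\big)_i$ and $\big(\mu_i^{1-\alpha}\big)_i$.

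I expect the only real friction to be bookkeeping in the case $\alpha < 1$: one must settle the convention for $Y^{\alpha'}$ when $Y$ is rank-deficient — the objective is then $+\infty$ unless $\Supp Y \supseteq \Supp X$, so the infimum is unchanged and one may restrict to invertible $Y$ (or pass to $\Supp X$) — and then check that the pinching reduction and the regularized optimizer are compatible with that convention. The case $\alpha > 1$ is entirely routine.
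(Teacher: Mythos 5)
Your proof is correct, but it does not follow the paper's route: the paper disposes of Lemma~\ref{lem:supY} in one line, as ``a reformulation of Lemma 11 in \cite{mdsft13}'', whereas you give a self-contained argument. Your version checks out in both regimes. For $\alpha>1$, the explicit optimizer $Y=X^{\alpha}/\tr[X^{\alpha}]$ (using $\alpha\alpha'=\alpha-1$ and $1-\alpha'=1/\alpha$) together with Schatten--Hölder for the conjugate pair $(1/\alpha',\alpha)$ is the standard variational argument. For $\alpha\in[\demi,1)$, the regularized optimizer commutes with $X$, so the limit $\varepsilon\to0^{+}$ is a scalar computation; and the lower bound via the pinching onto the spectral projections of $X$, the Choi--Davis operator Jensen inequality for the operator-convex map $t\mapsto t^{\alpha'}$ (valid exactly because $\alpha'\in[-1,0)$, i.e.\ $\alpha\geqslant\demi$ --- this is where the hypothesis enters), and the scalar Hölder step with exponents $1/\alpha$ and $1/(1-\alpha)$ cleanly reduces the claim to the commutative reverse Hölder inequality; the exponent bookkeeping ($a_ib_i=\lambda_i^{\alpha}$, $\sum_i b_i^{1/(1-\alpha)}=1$) is right. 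Your remark about rank-deficient $Y$ is not mere pedantry but necessary for the statement to be true: with the naive generalized-inverse convention, e.g.\ $X=\ident$ and $Y$ a rank-one projector give $\tr[Y^{\alpha'}X]=1<\snorm{X}_{\alpha}$, so one must indeed either restrict to $\Supp Y\supseteq\Supp X$ (declaring the objective $+\infty$ otherwise) or work with invertible $Y$, exactly as you propose; this matches how the quantity is used in Lemma~\ref{lem:renyi-expressions}. What your approach buys is a proof from first principles (explicit optimizer, Hölder, pinching) in place of an external citation; what the paper's citation buys is brevity and consistency of conventions with \cite{mdsft13}, where the variational formula is stated in the form actually needed.
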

\begin{proof}
    This is simply a reformulation of Lemma 11 in \cite{mdsft13}.
\end{proof}

\begin{lemma}\label{lem:op-vec-xy}
    Let $\ket{\psi} \in \sfA \otimes \sfB$, and let $X_A \in \linop(\sfA)$ and $Y_B \in \linop(\sfB)$. Then, we have that
    \[ \Op_{A \rightarrow B}(X_A \otimes Y_B \ket{\psi}) = Y_B \Op(\ket{\psi}) X_A^{\top}. \]
\end{lemma}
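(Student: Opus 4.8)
The plan is to reduce the identity to a statement about basis elements by exploiting linearity, and then to verify it directly from the definition of $\Op_{A \rightarrow B}$ and of the transpose. Both sides of the asserted equation depend linearly on $\ket{\psi}$ and, separately, linearly on each of $X_A$ and $Y_B$, so it suffices to establish the identity when each of these three objects is drawn from a spanning set: concretely, for $\ket{\psi} = \ket{m}_A \otimes \ket{n}_B$, $X_A = \ket{k}_A\bra{i}_A$ and $Y_B = \ket{l}_B\bra{j}_B$, where all kets and bras refer to the fixed computational bases of $\sfA$ and $\sfB$.

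For such a choice I would compute both sides explicitly. On the left, $X_A \otimes Y_B \ket{\psi} = \delta_{im}\delta_{jn}\, \ket{k}_A \otimes \ket{l}_B$, so by the defining action of $\Op_{A \rightarrow B}$ on basis vectors, $\Op_{A \rightarrow B}(X_A \otimes Y_B \ket{\psi}) = \delta_{im}\delta_{jn}\, \ket{l}_B\bra{k}_A$. On the right, $\Op_{A \rightarrow B}(\ket{\psi}) = \ket{n}_B\bra{m}_A$, while the transpose with respect to the computational basis gives $X_A^{\top} = \ket{i}_A\bra{k}_A$; hence $Y_B\, \Op_{A \rightarrow B}(\ket{\psi})\, X_A^{\top} = \bigl(\ket{l}_B\bra{j}_B\bigr)\bigl(\ket{n}_B\bra{m}_A\bigr)\bigl(\ket{i}_A\bra{k}_A\bigr) = \delta_{jn}\delta_{mi}\, \ket{l}_B\bra{k}_A$. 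The two expressions coincide, and the general case follows by linearity in $\ket{\psi}$, $X_A$ and $Y_B$.

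I do not expect any genuine obstacle here; the argument is purely a matter of index bookkeeping. The one point that deserves attention is why the transpose $X_A^{\top}$, rather than $X_A$ itself, appears on the right-hand side: since $\Op_{A \rightarrow B}$ turns the $\sfA$-part of $\ket{\psi}$ into a \emph{bra}, acting on $\ket{\psi}$ by $X_A$ on the $\sfA$-side corresponds to right-multiplying $\Op_{A \rightarrow B}(\ket{\psi})$ by the transpose of $X_A$; checking this correctly is essentially the whole content of the lemma. (One could also argue basis-independently using the ``transpose trick'' for the unnormalized maximally entangled vector $\sum_i \ket{i}_A \otimes \ket{i}_A$, but the direct computation above is shorter and entirely self-contained.)
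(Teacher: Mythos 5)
Your proof is correct, and it is essentially the argument the paper has in mind: the paper simply declares the lemma ``a simple manipulation of indices'' and defers to Watrous's lecture notes, whereas you carry out exactly that index bookkeeping on basis elements $\ket{m}_A\otimes\ket{n}_B$, $\ketbra{k}{i}_A$, $\ketbra{l}{j}_B$ and conclude by linearity. The computation, including the appearance of $X_A^{\top}$ from the $\sfA$-side turning into a bra, checks out.
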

\begin{proof}
    This can be shown by a simple manipulation of indices; see for example \cite[Section 2.4]{watrousnotes11}.
\end{proof}

\begin{lemma}\label{lem:op-vec-norm}
    Let $\ket{\psi} \in \sfA \otimes \sfB$. Then,
    \[ \braket{\psi}{\psi} = \tr[\Op_{A \rightarrow B}(\ket{\psi})^{\dagger} \Op_{A \rightarrow B}(\ket{\psi})] \]
    and therefore,
    \[ \| \ket{\psi} \| = \| \Op_{A \rightarrow B}(\ket{\psi}) \|_2.  \]
\end{lemma}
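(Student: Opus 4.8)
The plan is to reduce the statement to a one-line computation in the computational bases. Write $\ket{\psi} = \sum_{i,j} c_{ij}\, \ket{i}_A \otimes \ket{j}_B$ for scalars $c_{ij} \in \mbC$. By the definition of $\Op_{A \rightarrow B}$ on the computational basis and linearity, $M := \Op_{A \rightarrow B}(\ket{\psi}) = \sum_{i,j} c_{ij}\, \ket{j}_B\bra{i}_A$.

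First I would evaluate the left-hand side of the first identity: by orthonormality of the computational bases of $\sfA$ and $\sfB$, $\braket{\psi}{\psi} = \sum_{i,j} \abs{c_{ij}}^2$. Next I would compute $M^{\dagger} M = \sum_{i,j,k,l} \overline{c_{ij}}\, c_{kl}\, \ket{i}_A \braket{j}{l}_B \bra{k}_A = \sum_{i,j,k} \overline{c_{ij}}\, c_{kj}\, \ket{i}_A\bra{k}_A$, where the sum over $l$ collapsed to $l = j$ by orthonormality; taking the trace picks out the terms with $i = k$, so $\tr[M^{\dagger} M] = \sum_{i,j}\abs{c_{ij}}^2 = \braket{\psi}{\psi}$, which is exactly the first displayed equation. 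Finally, the norm statement follows immediately from the definition of the Schatten $2$-norm recorded in the notation table: $\norm{M}_2 = \tr[(M^{\dagger}M)^{2/2}]^{1/2} = \tr[M^{\dagger}M]^{1/2} = \braket{\psi}{\psi}^{1/2} = \norm{\ket{\psi}}$.

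There is essentially no obstacle here; this is just the standard fact that the operator--vector correspondence $\Op_{A \rightarrow B}$ is an isometry from $\sfA \otimes \sfB$ onto $\linop(\sfA,\sfB)$ equipped with the Hilbert--Schmidt inner product, and the lemma records the special case of equal arguments. The only point requiring minimal care is bookkeeping: one must contract the second ($\sfB$) indices when forming $M^{\dagger}M$ (as opposed to the first indices, which would arise in $M M^{\dagger}$ and compute $\rho_B$ rather than $\rho_A$). Alternatively, one could prove the polarized statement $\tr[\Op(\ket{\phi})^{\dagger}\Op(\ket{\psi})] = \braket{\phi}{\psi}$ by checking it on product basis vectors $\ket{i}_A \otimes \ket{j}_B$ and extending by sesquilinearity, but the direct computation above is the shortest route.
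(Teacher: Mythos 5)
Your computation is correct, and it is exactly the standard argument (the operator--vector correspondence is an isometry for the Hilbert--Schmidt inner product) that the paper does not spell out: its ``proof'' is simply a pointer to Section~2.4 of Watrous's lecture notes. So you have filled in, correctly and with the same underlying idea, the routine basis computation that the paper delegates to that reference.
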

\begin{proof}
    Again, see \cite[Section 2.4]{watrousnotes11}.
\end{proof}

\printbibliography


\end{document}